\title{Continuous Credit Networks and Layer 2 Blockchains: Monotonicity and Sampling}
\author{Ashish Goel}
\affiliation{%
  \institution{Stanford University}
  \city{Stanford}
  \state{California}
  \postcode{94305}
}
\email{ashishg@stanford.edu}
\author{Geoffrey Ramseyer}
\affiliation{%
  \institution{Stanford University}
  \city{Stanford}
  \state{California}
  \postcode{94305}
}
\email{geoff.ramseyer@cs.stanford.edu}
\begin{abstract}
  To improve transaction rates, many cryptocurrencies have implemented so-called ``Layer-2'' transaction protocols, where payments are routed across networks of private payment channels.  However, for a given transaction, not every network state provides a feasible route to perform the payment; in this case, the transaction must be put on the public ledger.  The payment channel network thus multiplies the transaction rate of the overall system; the less frequently it fails, the higher the multiplier.


  We build on earlier work on credit networks and show that this network liquidity problem is connected to the combinatorics of graphical matroids.  Earlier work could only analyze the (unnatural) scenario where transactions had discrete sizes.
  In this work, we give an analytical framework that removes this assumption.  This enables meaningful liquidity analysis for real-world parameter regimes.

  Superficially, it might seem like the continuous case would be harder to examine.  However, removing this assumption lets us make progress in two important directions.  First, we give a partial answer to the ``monotonicity conjecture'' that previous work left open. This conjecture asks that the network's performance not degrade as capacity on any edge increases.  And second, we construct here a network state sampling procedure with much faster asymptotic performance than off-the-shelf Markov chains ($O(\vert E\vert \beta(\vert E\vert))$, where $\beta(x)$ is the complexity of solving a linear program on $x$ constraints.)

  We obtain our results by mapping the underlying graphs to convex bodies and then showing that the liquidity and sampling problems reduce to bounding and computing the volumes of these bodies. The transformation relies crucially on the combinatorial properties of the underlying graphic matroid, as do the proofs of monotonicity and fast sampling.

\end{abstract}
\keywords{Credit Networks; Monotonicity}
\begin{document}

\maketitle

\section{Introduction}

\subsection{Problem Motivation}
Recent years have witnessed a dramatic rise in the importance of cryptocurrencies such as Bitcoin \cite{nakamoto2008bitcoin} and Ethereum \cite{buterin2014next} as means of exchanging money and storing value \cite{coinbasebtcgrowth}.  Unfortunately, however, most blockchains can directly process only a small number of transactions per second.  Most Bitcoin variants, for example, only support 4-5 transactions per second \cite{bitcoinscalinghackernoon}.  One method for improving data throughput is to settle transactions privately, using the blockchain as fallback security guarantor.  This enables the construction of so-called ``off-chain'' or ``Layer-2'' networks. The most famous of these networks is the Lightning network on Bitcoin\cite{poon2016bitcoin}, but Layer 2 protocols have also been implemented in Ethereum and other cryptocurrencies \cite{raiden,connext,outpace}.  

The core idea of these networks is that if two parties frequently transact, they need not record every transaction on the global ledger, but rather, need only record the net total of their transactions at the end of a business relationship.  Traditional commercial banks behave in a similar manner; instead of repeatedly transferring cash, they often have accounts with each other, and simply move money in and out of these accounts.

Traditional banks can go to court if a financial transfer never arrives.  In an anonymous cryptocurrency network, or when blockchains span political jurisdictions, users may have no such recourse.  Instead, to guarantee solvency, two parties can agree to put money into escrow.  Transferring money then means privately updating shares of ownership of the escrowed funds.  Such an escrow account is typically called a ``payment channel.''  Pairs of agents who lack a direct channel can route funds across a path in the payment channel graph.  Figure \ref{fig:ln} gives an example payment network on 5 nodes.
\begin{wrapfigure}{r}{0.4\textwidth}
  \centering
  \caption{An example payment network on 5 nodes.  In this network, $A$ and $B$ have together invested 10 units of capital towards the edge $(A,B)$.  From this configuration, $A$ can send up to $5$ units of money to $B$.  $B$ could send up to $4$ units of money to $D$ through the graph, $2$ units directly and $2$ via $C$.}
  \includegraphics[width=5.59cm]{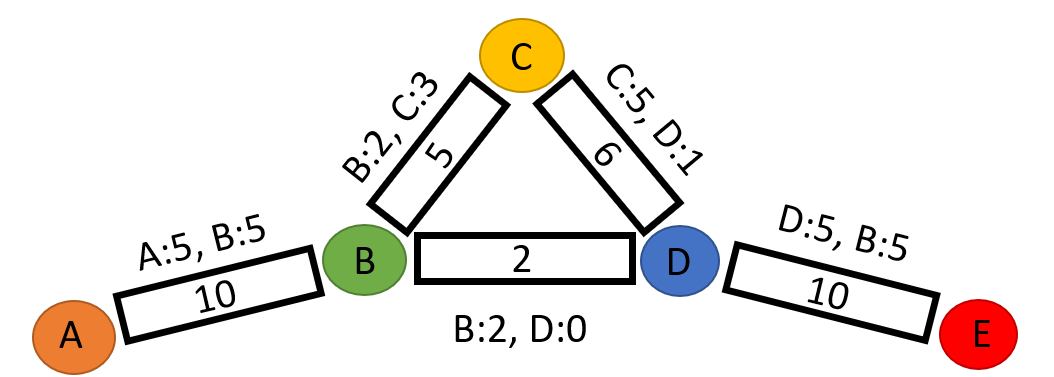}
  \label{fig:ln}

\end{wrapfigure}

However, if one agent comes to own the entirety of the escrowed funds, it cannot accept additional money on that channel without opening itself to the risk of being cheated by its channel counterparty.  More generally, no fixed network of payment channels can satisfy every possible sequence of transactions.  We give a more detailed description of the network model, based on credit networks \cite{dandekar2011liquidity}, in Section 2.

When a payment channel network cannot process a transaction, that transaction falls back to the underlying blockchain.  The ratio of the number of transactions successfully settled privately to the number of transactions that a payment channel network cannot settle thus acts as a multiplier on a blockchain's limited transaction rate.  This ratio is our object of study; a higher ratio is clearly desirable for improving blockchain performance.  Higher liquidity can also benefit individual agents directly, in that a higher transaction throughput capacity should mean lower fees per individual transaction, and higher liquidity means a higher chance of being able to purchase something online quickly.

Implementing a payment channel network in the real world is a complex systems design problem.  For example, agents must have some way of discovering payment routes across the network.  Some users might also like to keep their payment patterns private from network intermediaries.  However, as we show here, even if a payment network operates perfectly -- no hardware ever fails, every node has full information, transactions resolve instantaneously -- no payment network can resolve every transaction (under a natural class of transaction models).  Therefore, the baseline performance of e.g. a payment routing system is not 100\% success, but rather, the transaction success rate in an ideal system.  Our goal in this work is to study this baseline performance.

In real-world implementations of payment channel networks, off-chain transactions can settle in seconds, but putting data onto the blockchain or settling on-chain can take 10s of minutes.  Because the network's topology itself cannot change fast enough to respond to individual transactions, we study the performance of fixed network topologies.

Of course, placing capital into escrow to secure a link incurs a cost, in terms of the interest rate on the escrowed funds.  When agents participate in a payment network, they implicitly choose to pay this cost in exchange for the ability to use the network directly.  One goal of this work is to provide a framework for agents to understand the utility they would derive from participating in the network in a particular manner, and thus to evaluate whether this utility outweighs the costs they must pay.  Agents must make a strategic choice about how to allocate their escrowed funds.  One agent could choose to make one high-capacity link with a central node, or it could choose to make several lower-capacity links with different nodes.  An agent might wish to avoid making a link with another agent whose internet connection is unreliable.  Given some set of feasible options, agents must implicitly pick a preferred option.  One goal of this work is to lay out a framework for agents to evaluate and compare their options.

\subsection{Monotonicity}

At a high level, a payment network appears similar to a single-commodity flow network when resolving a single transaction.  The main difference is that transactions have side effects, and side effects of different transactions can cancel each other out or compound each other.  Many network flow problems or queueing problems study the case where the flow across an edge in each direction is independently limited.  In a payment network, flow in one direction cancels out flow in the other direction.  At any point in time, on each link, the total flow sent since the start of network operation in one direction must be approximately equal to the total flow in the other direction (where approximately means $\pm$ the amount of funds in escrow on that link).

Many seemingly similar network flow problems are subject to what has become known as Braess's paradox \cite{braess2005paradox}.  Famously, in car traffic networks, when drivers choose their routes selfishly, total traffic throughput in a road network can be decreased by adding roads.  Example 4.6 of \cite{ramseyer2020liquidity} gives a simple example of a credit network variant that violates monotonicity.  Many network models display similar deleterious effects.  

In cryptocurrency networks, where no central party exists to act as a coordinator, if selfishly routing payments had this deleterious effect, there would be no obvious way to coordinate a more efficient routing process.  It could also be possible for an adversarial party to add edges to the network to decrease throughput (there might even be a financial incentive to do this -- offchain transactions reduce demand for data on the blockchain, which lowers on-chain transaction fees).
There naturally arises, then, a question about the ``monotonicity'' of payment networks -- is the fraction of transactions that succeed monotonically non-decreasing as the number and capacity of edges increase?

A Layer-2 payment network closely corresponds to a distributed currency model known as a credit network, first formulated independently by \cite{ghosh2007mechanism,defigueiredo2005trustdavis,karlan2009trust}.  Dandekar et. al. and Goel et. al. \cite{dandekar2011liquidity,goel2015connectivity} developed the mathematical foundations for the model and analyze transaction success rates, which they call ``liquidity''.  In order to perform their analysis, these authors assume that transactions have discrete sizes.  With this assumption, the monotonicity conjecture is equivalent to a conjecture that matroid theory's ``negative correlation'' conjecture holds on the collection of independent sets of a graphical matroid.  Feder and Mihail in \cite{as1992balanced} showed that if a matroid has this property, one can sample in polynomial time an approximately random basis, but not all matroids have this property.  More recently, Anari et. al. \cite{anari2018log} gave a polynomial-time approximate sampling algorithm for general matroids.

\subsection{Our Results}

In this work, we give a general analytical framework for Layer-2 payment networks with arbitrarily sized transactions, eliminating a key assumption needed for \cite{goel2015connectivity,dandekar2011liquidity}, and use this toolkit to prove the monotonicity conjecture for the classes of transactions most relevant to cryptocurrency applications.  We also give an efficient network state sampling algorithm.  

The detailed results are more understandable with the following two pieces of context.  

First, formally defining the ``probability that a transaction is successful'' requires a small amount of care.  We will define it as the probability that a given transaction is feasible in configuration sampled according to some probability measure $\mu$ on the set of configurations.  This, of course, raises the question of what measures $\mu$ arise from real-world behavior.  

We model transactions as arising from an exogenous random process.  E-commerce transactions, for example, arise largely from demand for consumer goods, not from anything related to the short-term status of a payment processing pipeline.  For consistency with prior work \cite{goel2015connectivity,dandekar2011liquidity}, we also model transactions as arising in sequence in discrete time.   At every timestep, a payment network simply executes the current transaction if it is able in its current escrow-configuration (where a network's escrow-configuration is the ownership status of every edge; henceforth ``configuration'').  Every successful transaction changes the configuration of the network, so a random transaction model induces a Markov chain on the set of configurations of the network.   In the rest of this work, we measure transaction success probability relative to the invariant measure of this Markov chain.  There are other similar transaction models that would define natural, induced probability measures; results derived through our analytical framework depend on the probability measure, not the process through which a measure is induced.


Second, the state space of the induced Markov chain is most succinctly described not as a set of escrow-configurations but rather, as a set of classes of functionally-equivalent configurations.  An agent can send money to itself along a cycle to change the network configuration -- however, every transaction realizable in the network configuration prior to the cyclic flow is realizable in the network configuration after the cyclic flow.  Sending flow along cycles gives a ``cycle-equivalence'' relation between configurations, and for the purpose of executing transactions, representatives of one equivalence class are all functionally equivalent.  The induced Markov chain is thus most naturally viewed as acting on the equivalence classes of this equivalence relation.

Prior work primarily analyzed the case where transaction distributions are symmetric, which is to say, agent $u$ is as likely to pay agent $v$ as $v$ is likely to pay $u$.  Such transaction distributions induce uniform probability measures.  This assumption may not be realistic in some real-world credit network use cases.  For clarity, note that of the below results, result 1 (sections 2-4) does not rely on this assumption, while result 2 (section 5) does, as does Theorem \ref{thm:uniform_sample_algorithm} of result 3.  We give a small comment on this assumption and related open problems in section 7.  

Our results are the following:

\begin{enumerate}

\item

In sections 2, 3, and 4, we construct a new toolset for analyzing the set of configurations of a credit network.  Simply put, we construct a map $D$ (for ``Direction'' of a transaction) from the set of possible transactions to $\mathbb{R}^n$ and a map $S$ from the set of configurations of a credit network to $\mathbb{R}^n$ that together preserve how transactions and configurations interact.  If a transaction $\tau$ executed from configuration $w_1$ sends the network into configuration $w_2$, then $S(w_1)+D(\tau)=S(w_2)$.  The stationary measure of the induced Markov chain is a probability measure on $\mathbb{R}^n$, and analyzing liquidity reduces to analyzing the measures of subsets of $\mathbb{R}^n$.

For comparison to prior work, we study a continuous extension of the credit network model of \cite{goel2015connectivity,dandekar2011liquidity}.  Specifically, Goel et. al. and Dandekar et. al. \cite{goel2015connectivity,dandekar2011liquidity} measure performance with regard to transactions of size $1$, where $1$ is some minimum transaction size, while we study transactions of arbitrary real-valued size.  

If transactions all have the same size (or integral sizes), then configurations of the credit network can be mapped into forests (acyclic subsets of edges) of a related, undirected multigraph with the same vertex set.  They show that the probability that a transaction of size $1$ succeeds from agent $u$ to agent $v$ is equal to the probability that a forest sampled uniformly at random puts $u$ and $v$ in the same connected component.  

Unfortunately, this correspondence gives no clues as to how to analyze liquidity for transactions of larger size.  In real-world blockchains like Bitcoin, one Satoshi -- one ``smallest unit of money'' -- is worth at the time of writing roughly 1/10000th of a US dollar, while Lightning channel capacities may run into the hundreds or thousands of dollars.  In this kind of parameter regime, a transaction of size $1$ succeeds with probability very close to $1$ in any network topology, and most transactions sent have much larger sizes.  An agent in the network would be interested primarily in transactions of larger, but variable, sizes.  

We also study the rate at which transaction success rates decline as transaction sizes increase.  Modeling transactions as taking on arbitrary real-valued sizes streamlines this analysis and requires a new set of analytical tools.

As an aside, these analytical frameworks are in fact related mathematically.  From the point of view of prior work, we study the success rate of transactions of fixed real-world value, as discretization becomes arbitrarily fine.  Informally, analyzing the discrete case compared to the continuous case is like studying the number of points of a lattice within a well-behaved convex body, instead of directly studying the volume of said convex set.  Prior work of \cite{goel2015connectivity,dandekar2011liquidity} is able to measure liquidity using forests because of a correspondence between forests and these lattice points (\cite{stanley1980decompositions}, Example 3.1).  As a lattice becomes increasingly fine, a (weighted) count of the number of lattice points in a well-behaved convex set approximates the volume of said set; we note without proof that many of our arguments could be adapted to hold if the transaction granularity is sufficiently small using this principle.

\item

In section \ref{sec:monotonicity}, we prove theorems about the liquidity of general credit networks and a limited version of the monotonicity conjecture.

Relative to a uniform probability measure (i.e. transaction distributions that are symmetric), transaction success probability can be computed via effective resistance calculations.

Payment networks on cryptocurrencies are particularly beneficial for two kinds of transactions.  First, they streamline transactions between parties that transact frequently, especially between parties with a direct connection.  And second, by increasing transaction throughput, they put downward pressure on transaction fees, which helps make small transactions economically worthwhile.  Again relative to a uniform probability measure (i.e. transaction distributions that are symmetric), our analytical toolkit enables a proof of the monotonicity conjecture for transactions of these types.  

In real world terms, this means that bad actors cannot add edges to the graph and harm either of these payment network use cases.  

\item

Finally, in section 6, we also show that the problem of sampling a random network state reduces to the problem of sampling a random point from a zonotope (that depends on the payment network graph).  Sampling from a distribution on a convex set is a well studied problem and can be done in practice with reasonably fast Markov chains, such as the well-studied Hit and Run process \cite{smith1984efficient}.   

Furthermore, relative to a uniform measure on network configurations, we give an algorithm (Theorem \ref{thm:uniform_sample_algorithm}) for exact uniformly random sampling states of a payment network of $n$ agents and $m$ edges in time $O(m\beta(m))$, where $\beta(m)$ is the complexity of solving a linear program on $O(m)$ constraints.  For comparison, Hit and Run would require $O(n^3\beta(m))$.  Currently, $\beta(m)=m^\omega$ \cite{cohen2019solving}, for $\omega$ the matrix multiplication constant.

\end{enumerate}



\subsection{Related Work}

The concept of the Credit Network that underlies the analysis here was developed independently by \cite{ghosh2007mechanism,defigueiredo2005trustdavis,karlan2009trust}.  These authors worked in the contexts of informal borrowing networks \cite{karlan2009trust}, online reputation systems backed by social relationships \cite{defigueiredo2005trustdavis}, and auction systems built on a credit network.   

Recently, an analysis of the distribution of network states induced by a distribution on transactions was conducted in \cite{branzei2017charge}.  The stated goal of these authors is to understand transaction pricing models, but central to the pricing analysis is an understanding of transaction failure rates in the short term.  By constrast, for scaling a network, the key metric is failure rate amortized in the long term. Their analysis, however, only looks at graphs consisting of isolated edges and of stars, and also assumes that the size of a transaction between pairs is fixed. 

Most early implementations of payment networks discuss transactions as being routed along single paths (as in e.g. \cite{poon2016bitcoin}).  But for the purpose of actually sending money, one successful \$2 transaction along a single path is functionally equivalent to two separate successful \$1 transactions along two different paths.  More generally, a transaction in a payment network can be thought of as a flow through the network.

What we study is also similar to the study of ``capacitated transportation networks,'' as in \cite{doulliez1972transportation} and \cite{hassin1988probabilistic}.  In this model, the capacity of each edge is drawn independently from some random distribution.  These capacities are ``successful'' if they can satisfy some fixed set of flow sources/demands, and the reliability of the network is the chance of success.  Similarly, Grimmett and Suen \cite{grimmett1982maximal} studied the expected size of a maximum flow through a graph, if edge capacities are random.  Efforts to approximately compute transportation network reliability include \cite{lin1995reliability}.

A key difference from our model, however, stems from the fact that routing flow along a cycle does not change the ability of a payment network to satisfy a flow demand.  The set of network states, therefore, is not the product of sets of states on individual edges (unless the network is acyclic).

\section{The Network Model}

In practice, a payment network consists of a set of agents and a set of edges between agents, where each edge has some associated amount of money in escrow.  If an edge runs between two agents $A$ and $B$, then $A$ and $B$ privately track how much of the escrow on the edge belongs to $A$ and how much belongs to $B$.  For example, Figure \ref{fig:ln} gives a simple payment network between a few nodes.


Suppose that agent $A$ wishes to pay one unit of money to agent $B$.  Then the amount of money on the edge between $A$ and $B$ remains constant, but $A$ and $B$ privately decide that of this escrow, $A$'s ownership stake decreases by one, and $B$'s ownership increases by one.  Of course, this transaction would fail if and only if $A$'s ownership stake were less than one.

Formally, then, we can model a payment network as the following.  As transactions occur, the unchanging components of the system consist of a graph $G=(V,E)$, representing nodes and edges, along with a map $c:E\rightarrow\mathbb{R}_{\geq 0}$ that records the capacity of each edge.  An escrow configuration of the network, which changes as transactions happen, will be a map $w:(V\times V)\rightarrow\mathbb{R}_{\geq 0}$ recording, for each edge, how much of the escrow each agent owns.  Specifically, for some edge $(u,v)$, if $w(u,v)=k$, then $u$ owns $k$ units of escrow on edge $(u,v)$, and $w(u,v)+w(v,u)=c(u,v)=c(v,u)$.  

Before proceeding, we would like to note that an escrow configuration $w$ in the above sense is {\it not} the same as the network states discussed in the introduction.  For the purpose of sending money, many escrow configurations can be functionally equivalent, and we will treat equivalent escrow configurations as a single network state.

A simple transaction along a single edge will consist of a sender $u$, a receiver $v$, and an amount $k$, and we will say that the transaction succeeds if and only if $w(u,v)\geq k$.  If this transaction succeeds, the network will be left in the state where $w(u,v)$ is decreased by $k$ and $w(v,u)$ is increased by $k$.

To complete the model, a transaction $\tau$ consists of a sender $x$, a receiver $y$, and a transaction amount $k$, and we will say that $\tau$ succeeds in a configuration $w$ if and only if there exists a commodity flow $f=\lbrace f_(u,v):(u,v)\in E\rbrace$ taking $k$ units of flow from $x$ to $y$ with $w(u,v)\geq f_{(u,v)}$.  Such a transaction change an escrow configuration in the same way that simple transactions of size $f_{(u,v)}$ along each edge $(u,v)$ collectively change an escrow configuration.  

\subsection{Definitions and Basic Properties}

First, note that if agents can successfully perform the same sequences of transactions starting from two different escrow configurations, then in terms of moving money, the configurations are identical.  This motivates the following definition.  

\begin{definition}[Transaction Equivalence]

Two configurations $w_1$ and $w_2$ are {\it transaction equivalent} if, given any sequence of transactions $\tau$, $\tau$ succeeds starting from $w_1$ if and only if $\tau$ succeeds starting from $w_2$.
\end{definition}

Unfortunately, this definition provides no obvious aid in understanding the space of configurations.  For that, consider as an example a cycle graph where each edge has capacity $1$, and pick one direction around the cycle to be ``clockwise''.  Let $w_1$ be the configuration where $w_1(u,v)=1$ if and only if $(u,v)$ points in the clockwise direction, and let $w_2$ be the configuration where $w_2(u,v)=1$ if and only if $(u,v)$ points in the counterclockwise direction.  Then for any two vertices $x$ and $y$,
$x$ could send at most one unit of payment to $y$ along a counterclockwise path in configuration $w_1$ and could send at most one unit of payment along a clockwise path in configuration $w_2$.  In this sense, then, $w_1$ and $w_2$ are transaction-equivalent.

Observe that one could move from $w_1$ to $w_2$ by routing a payment from a vertex to itself in a counterclockwise direction.  This motivates the following definition.

\begin{definition}[Cycle Equivalence (Definition 2, \cite{dandekar2011liquidity})]
Two configurations are {\it cycle-equivalent} if and only if one is reachable from the other by routing a series of payments along cycles.
\end{definition}

We will see later that these two definitions are equivalent (Corollary \ref{thm:txiscycle}), and consequently, that the choice of route when making a transaction does not influence the resulting cycle-equivalence class.


Finally, the ``liquidity'' of the network is the probability that a transaction will succeed from a ``random'' cycle-equivalence class.  We make this definition formal in \ref{sec:markov1}, but intuitively, the distribution of interest is the one that is induced by real-world random processes.  We model this as the stationary distribution of a Markov chain that sequentially attempting to execute randomly generated transactions; this Markov chain has a dimple description if it's state space is thought of as cycle-equivalence classes, not as individual configurations.

\section{From Networks to Continuous Representations}
\label{sec:configurations}
For the rest of the discussion, we will assume that a graph $G=(V, E)$ is connected, and let $n=\vert V\vert -1$ be the size of a spanning tree in $G$.

At a high level, in this section, we will construct a map $S$ from the set of escrow configurations to a convex subset $Z$ of $\mathbb{R}^n$, and show that sending money in the credit network is equivalent to moving in $Z$.  That is to say, every transaction $\tau$ corresponds to some vector $v_\tau\in\mathbb{R}^n$, and if executing $\tau$ from a configuration $w$ results in the configuration $w^\prime$, then $S(w)-v_\tau=S(w^\prime)$.  Furthermore, the preimage of any point in $Z$ is exactly one cycle-equivalence class of configurations.  


This map relies on a representation of the graphic matroid in $\mathbb{R}^n$, subject to the additional constraint that the signed summation of the edges visited while walking in a cycle is $0$.  This condition lets us discuss the ``direction'' of money moving from a vertex $x$ to a vertex $y$ in a well-defined manner.

Fix some orientation $s:(V\times V)\rightarrow \pm 1$ on every edge.

\begin{definition}
Let $D$ (for ``direction'') be a map from $E$ to $\mathbb{R}^n$.  

The {\em direction} of a path $p=(v_0,...,v_k)$ is simply the (signed) summation of $D$ along the path:
\begin{equation*}
\sum_{i=0}^{k-1}s(v_i, v_{i+1})D(v_i, v_{i+1})
\end{equation*}  

The direction of a flow $f$ is similarly defined to be 
$\sum_{(u,v)\in V\times V}f_{(u,v)} s(u,v)D(u,v)$

We will refer to the direction of a path $p$ as $D(p)$ and the direction of a flow $f$ as $D(f)$.

\end{definition}

Sending money in a credit network is akin to sending flow through the underlying residual graph.  We want the direction of the flow to capture something about how a credit network configuration changes as money is sent.  For that, it suffices for a direction map to satisfy the following conditions:

\begin{definition}
 A direction map $D$ is a {\em spanning representation} of the credit network if:

 \begin{enumerate}
 	\item The direction (under $D$) of any path is $0$ if and only if $p$ is a cycle.
 	\item For any spanning tree $T$, the set of vectors $\lbrace D(t)\rbrace_{t\in T}$ spans $\mathbb{R}^n$.
 \end{enumerate}

\end{definition}

\begin{theorem}
\label{thm:goodreps}
Spanning representations exist.
\end{theorem}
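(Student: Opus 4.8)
The plan is to construct an explicit direction map $D$ from a fixed spanning tree and verify the two axioms. Fix a spanning tree $T$ of $G$ with $n = |V| - 1$ edges, and choose an orientation $s$ on every edge of $G$ (as the statement already allows). For a tree edge $t \in T$, set $D(t) = e_t$, the corresponding standard basis vector of $\mathbb{R}^n$. For a non-tree edge $f \notin T$, the oriented edge $f$ together with the unique path in $T$ between its endpoints forms a fundamental cycle; define $D(f)$ so that the signed sum of $D$ around that fundamental cycle is zero. Concretely, if the fundamental cycle of $f$ traverses tree edges $t_1, \dots, t_k$ with signs $\sigma_1, \dots, \sigma_k$ (relative to $s$ and the chosen cycle direction), then $D(f) := -s(f)^{-1}\sum_i \sigma_i D(t_i)$, i.e. $D(f)$ is forced to be the appropriate $\pm 1$ combination of the basis vectors $e_{t_i}$. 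This is exactly the standard representation of the graphic matroid of $G$ over $\mathbb{R}$ (columns of the signed incidence-style matrix reduced to $n$ independent rows), packaged so that the cycle condition holds by construction.

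First I would check axiom (1): the direction of a path is $0$ iff the path is a closed cycle. The key observation is that $D$, extended linearly to the free module on oriented edges, factors through the cycle space: by construction every fundamental cycle maps to $0$, and the fundamental cycles generate the whole cycle space of $G$, so $D(\gamma) = 0$ for every closed walk $\gamma$. Conversely, if $p = (v_0, \dots, v_k)$ is a path (walk) with $D(p) = 0$, I would close it up using the unique $T$-path from $v_k$ back to $v_0$; call this closing path $q$. Then $D(p) + D(q) = 0$ so $D(q) = 0$, but $q$ lies entirely in $T$ and $\{D(t)\}_{t \in T} = \{e_t\}$ is a basis, so $q$ must be the empty path, forcing $v_k = v_0$, i.e. $p$ is a cycle. (One should be a little careful about what "cycle" means here — a closed walk versus a simple cycle — but the argument shows $D(p)=0 \iff v_0 = v_k$, which is the intended reading.)

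Axiom (2) is then nearly immediate for the \emph{distinguished} tree $T$, since $\{D(t)\}_{t\in T}$ is literally the standard basis. For an arbitrary spanning tree $T'$, I would argue that $\{D(t')\}_{t' \in T'}$ is a basis by a linear-algebra / matroid exchange argument: the map $D$ is a linear representation of the graphic matroid $M(G)$ (its kernel on edge-space is exactly the cycle space, by axiom (1)), so a set of edges is independent in $M(G)$ iff its $D$-images are linearly independent in $\mathbb{R}^n$; since every spanning tree is a basis of $M(G)$ and has exactly $n$ edges, its image is $n$ linearly independent vectors in $\mathbb{R}^n$, hence a spanning set. Alternatively one can give a direct determinant argument: the change-of-basis matrix expressing $\{D(t')\}$ in terms of $\{D(t)\}$ is, up to signs, a submatrix of the network matrix of $G$ with respect to $T$, and such matrices are totally unimodular, with the relevant maximal minor equal to $\pm 1$ precisely when $T'$ is a spanning tree.

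The main obstacle I expect is not any single hard step but rather bookkeeping the signs and orientations consistently: the interplay between the fixed edge orientation $s$, the chosen traversal direction of each fundamental cycle, and the induced signs on tree edges has to be pinned down carefully so that "the signed summation around any cycle is $0$" really holds simultaneously for all cycles and not just the fundamental ones. Once the linear extension $D$ is shown to kill the entire cycle space — which reduces to the fact that fundamental cycles span the cycle space, a standard fact — everything else follows from the identification of $D$ with a faithful linear representation of the graphic matroid. So the crux is setting up $D$ cleanly enough that the cycle-space factorization is manifest; after that, both axioms are short consequences.
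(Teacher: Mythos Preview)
Your construction is exactly the paper's: fix a spanning tree, send its edges to a basis, and define $D$ on each non-tree edge so that its fundamental cycle sums to zero; your verification of axiom~(1) via the cycle-space factorization is a cleaner repackaging of the paper's edge-by-edge substitution argument. If anything your proposal is more complete, since you explicitly argue axiom~(2) for an \emph{arbitrary} spanning tree (via the matroid-representation/unimodularity observation), whereas the paper's proof only establishes it for the distinguished tree and leaves the general case implicit.
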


The proof is in the appendix, but informally, take the edges of a spanning tree, represent them as the standard basis, and represent the other edges so that cycles sum to $0$.  We will not mention the representation when the choice of representation is clear from context.  The utility of a spanning representation is that it naturally induces a map from configurations to $\mathbb{R}^n$ that has several useful properties.

\begin{definition}

The {\it represented score} of a configuration $w$ with respect to a spanning representation $D$ is $S_D(w)=\sum_{(u,v)\in V\times V:s(u,v)=1}w(u,v)D(u,v)$.\footnote{The analysis would work out the same if we simply summed over all directed edges.  This definition makes later theorem statements cleaner.}  

\end{definition}

The condition that $s(u,v)=1$ means that the summation above is over edges pointed in their positive direction.  

This should be thought of as a generalization of the ``score vectors'' as in e.g. \cite{kleitman1981forests}.  In fact, the represented score map exactly captures the transactional relationships between cycle-equivalence classes.

\begin{theorem}

~
\label{thm:scorecycle}

\begin{enumerate}
	\item[1] Two cycle-equivalent configurations have the same represented score.
	\item[2] If a transaction along path $p$ of size $k$ takes configuration $w_1$ to configuration $w_2$, then $S_D(w_1) -k*D(p)=S_D(w_2)$.
	\item[3] If two paths $p$ and $q$ both go from vertex $x$ to vertex $y$, then $D(p)=D(q)$.
\end{enumerate}
\end{theorem}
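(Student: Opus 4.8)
The plan is to prove part 2 first, since parts 1 and 3 both follow quickly from it together with the two defining properties of a spanning representation. Write the path as $p=(v_0,v_1,\dots,v_\ell)$ (I rename the length to $\ell$ to avoid clashing with the transaction size $k$). Executing a transaction of size $k$ along $p$ leaves $w(u,v)+w(v,u)=c(u,v)$ unchanged on every edge, while on each path edge $\{v_i,v_{i+1}\}$ it decreases $w(v_i,v_{i+1})$ by $k$ and increases $w(v_{i+1},v_i)$ by $k$. I would then track the effect on $S_D(w)=\sum_{(u,v):s(u,v)=1}w(u,v)D(u,v)$ one path edge at a time. On the edge $\{v_i,v_{i+1}\}$, exactly one of its two orientations, say the one from $a$ to $b$, has $s(a,b)=1$, so the only summand of $S_D$ that changes is $w(a,b)D(a,b)$; the change in $w(a,b)$ is $-k$ if $(a,b)=(v_i,v_{i+1})$ and $+k$ if $(a,b)=(v_{i+1},v_i)$. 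Using that $D$ depends only on the underlying undirected edge (so $D(a,b)=D(b,a)$) and that the fixed orientation satisfies $s(v,u)=-s(u,v)$, in both cases the change in this summand equals $-k\,s(v_i,v_{i+1})D(v_i,v_{i+1})$. Summing over $i$ gives $S_D(w_2)-S_D(w_1)=-k\sum_{i}s(v_i,v_{i+1})D(v_i,v_{i+1})=-k\,D(p)$, which is part 2.

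For part 1, I would use that cycle-equivalence is generated by single cyclic payments, so it suffices to check that one cyclic payment preserves $S_D$. A payment of size $k$ routed along a cycle $C$ is precisely a transaction along the closed walk $C$, so part 2 gives the change $-k\,D(C)$, and $D(C)=0$ by the first property of a spanning representation. Iterating over the finite sequence of cyclic payments witnessing cycle-equivalence shows that cycle-equivalent configurations have equal represented scores.

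For part 3, I would first note that $D$ extended to walks is additive under concatenation and negates under reversal: reversing a walk replaces each $s(v_i,v_{i+1})$ by $s(v_{i+1},v_i)=-s(v_i,v_{i+1})$, so $D(\bar q)=-D(q)$. If $p$ and $q$ both run from $x$ to $y$, then $p\cdot\bar q$ is a closed walk at $x$, hence a ``cycle'' in the sense of the spanning-representation definition, and therefore $D(p)-D(q)=D(p\cdot\bar q)=0$, i.e. $D(p)=D(q)$.

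The only step requiring genuine care is the orientation bookkeeping in part 2: verifying that, regardless of how the fixed orientation $s$ happens to point on a given path edge, the induced change in the single surviving summand of $S_D$ reduces to the uniform expression $-k\,s(v_i,v_{i+1})D(v_i,v_{i+1})$. Everything after that is formal manipulation, and parts 1 and 3 are short corollaries of part 2 and of condition (1) of a spanning representation.
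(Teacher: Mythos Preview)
Your proposal is correct and follows essentially the same approach as the paper: prove part~2 first by tracking the effect of each path edge on $S_D$, then derive part~1 from part~2 using that cycles have direction~$0$, and obtain part~3 by concatenating with a reversed path to form a cycle. Your treatment of part~2 is in fact more careful than the paper's (which simply asserts the identity ``by rearranging the summations''), and for part~3 you use $p\cdot\bar q$ directly whereas the paper reaches the same conclusion via $p\cdot\bar p$ and $q\cdot\bar p$; these are cosmetic differences only.
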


A transaction of size $k$ from $x$ to $y$ is a special case of a flow demand, and as a corollary of the above properties, any flow that satisfies a given demand will have the same direction.  Given some spanning representation, then, it is reasonable to think of transactions not just as paths from $x$ to $y$ but also as flow demands or as vectors in $\mathbb{R}^n$.  In particular, we make the following convention:

\begin{definition}
A transaction $\tau$ sending $k$ units of flow from $x$ to $y$ corresponds to a vector $v\in\mathbb{R}^n$ if, for some path from $x$ to $y$, $v=kD(p)$.
\end{definition}

In other words, if executing $\tau$ sends configuration $w_1$ to $w_2$, then $S_D(w_1)-v=S_D(w_2)$.

These definitions are sufficient to analyze the space of cycle-equivalence classes. 

First, recall that the {\it Minkowski Sum} of vectors $v_1,...,v_k$ is the set $\lbrace x : x=\sum_i\lambda_iv_i, \lambda_i\in [0,1]\rbrace$.

\begin{theorem}

\label{thm:cyclemksum}
The image of the set of cycle-equivalence classes, under a spanning representation, is exactly the Minkowski sum of the vectors $\lbrace v_{(x,y)}=D(x,y)*c(x,y)) : (x,y)\in V\times V, s(x,y)=1\rbrace$.
\end{theorem}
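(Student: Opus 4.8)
The plan is to prove the two set inclusions separately, exploiting the fact that every escrow configuration $w$ decomposes, edge-by-edge, as a convex combination along each directed edge. Recall that for an edge $(x,y)$ with $s(x,y)=1$ we have $0\le w(x,y)\le c(x,y)$, so writing $\lambda_{(x,y)} = w(x,y)/c(x,y)\in[0,1]$ gives $S_D(w)=\sum_{(x,y):s(x,y)=1}\lambda_{(x,y)}\,v_{(x,y)}$ with $v_{(x,y)}=c(x,y)D(x,y)$. This shows immediately that the image of \emph{any} configuration — and hence the image of any cycle-equivalence class, which by Theorem~\ref{thm:scorecycle}(1) is a single well-defined point — lies in the Minkowski sum. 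So the ``$\subseteq$'' direction is essentially a one-line unpacking of the definition of $S_D$.

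For the reverse inclusion I would take an arbitrary point $z = \sum_{(x,y):s(x,y)=1}\lambda_{(x,y)} v_{(x,y)}$ with each $\lambda_{(x,y)}\in[0,1]$ and exhibit a configuration mapping to it: simply set $w(x,y)=\lambda_{(x,y)}c(x,y)$ and $w(y,x)=c(x,y)-w(x,y)$ for each edge with $s(x,y)=1$. This is a legal configuration (nonnegative entries summing to $c$ on each edge), and by construction $S_D(w)=z$. Since $S_D$ is constant on cycle-equivalence classes, the class of $w$ maps to $z$, so $z$ is in the image. Combining the two inclusions gives the theorem.

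The only genuine subtlety — and the step I expect to require the most care — is making sure the statement is about the image of \emph{cycle-equivalence classes} rather than of raw configurations, i.e. that the map $S_D$ is genuinely well-defined on equivalence classes and that no point of the Minkowski sum is ``missed'' because it only arises from configurations that happen to be cycle-equivalent to others. The first half of this is exactly Theorem~\ref{thm:scorecycle}(1); the second half is automatic from the surjectivity construction above, since we never needed a canonical representative — any preimage configuration suffices. It may be worth remarking that the Minkowski sum here is precisely the zonotope $Z$ generated by the scaled direction vectors $\{v_{(x,y)}\}$, which is the object referenced earlier as the target of the map $S$, and that its dimension is $n$ by the spanning property (condition 2 of a spanning representation applied to a spanning tree), though that dimensional remark is not strictly needed for the statement as posed. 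If one wants a cleaner exposition, I would present the proof as: (i) define the candidate bijection on representatives, (ii) invoke Theorem~\ref{thm:scorecycle}(1) for well-definedness, (iii) observe both inclusions follow from the two displayed formulas for $S_D(w)$, reading them in opposite directions.
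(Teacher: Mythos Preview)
Your proposal is correct and matches the paper's own proof essentially line for line: both directions follow by reading $S_D(w)=\sum_{(x,y):s(x,y)=1} w(x,y)D(x,y)$ as $\sum\lambda_{(x,y)}v_{(x,y)}$ with $\lambda_{(x,y)}=w(x,y)/c(x,y)\in[0,1]$, and constructing $w$ from a given tuple of $\lambda$'s for the reverse inclusion. The paper is terser and does not separately discuss well-definedness on cycle-equivalence classes, but your invocation of Theorem~\ref{thm:scorecycle}(1) for that point is correct and harmless.
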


As before, the $s(x,y)=1$ requirement simply means that the summation is over all edges, viewed in their positive direction.

With these definitions, we can finally make the following useful characterization of transaction-equivalence.

\begin{corollary}
\label{thm:txiscycle}
Transaction-equivalence is equivalent to cycle-equivalence.

\end{corollary}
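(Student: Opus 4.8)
The plan is to prove the two directions of the equivalence separately, with the ``easy'' direction being that cycle-equivalence implies transaction-equivalence, and the ``hard'' direction being the converse.

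\textbf{Cycle-equivalence $\Rightarrow$ transaction-equivalence.}  Suppose $w_1$ and $w_2$ are cycle-equivalent.  I would argue that routing flow along a cycle never changes which flow demands are feasible: if $w_2$ is obtained from $w_1$ by pushing a cyclic flow $g$ (with $w_1(u,v) \geq g_{(u,v)}$ along the cycle), and some transaction $\tau$ is feasible from $w_1$ via a flow $f$, then $f - g$ (after canceling, and choosing orientations consistently) is a nonnegative flow routing the same net demand that is dominated by $w_2$; conversely any flow feasible from $w_2$ can be shifted back by $g$ to one feasible from $w_1$.  One has to be slightly careful that $f - g$ can be made nonnegative, but since $g$ is a circulation one can always add back an appropriate circulation to clear negative entries without changing the net demand or exceeding capacities --- this is exactly the standard residual-graph / flow-decomposition argument.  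Iterating over the finite sequence of cyclic pushes connecting $w_1$ and $w_2$, and then over sequences of transactions, gives the claim.  (Alternatively, and more cleanly, I can invoke Theorem~\ref{thm:scorecycle}(1): cycle-equivalent configurations have the same represented score $S_D$, and by Theorem~\ref{thm:scorecycle}(2) the feasibility of a transaction from $w$ depends only on $S_D(w)$ together with whether $S_D(w) - v_\tau$ lies in the image zonotope $Z$ of Theorem~\ref{thm:cyclemksum}; hence equal scores force identical transaction behavior, including for sequences since the score after a successful prefix is again determined by the starting score.)

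\textbf{Transaction-equivalence $\Rightarrow$ cycle-equivalence.}  This is the substantive direction.  I would prove the contrapositive: if $w_1$ and $w_2$ are \emph{not} cycle-equivalent, exhibit a (single) transaction, or short sequence of transactions, feasible from one but not the other.  The key tool is Theorem~\ref{thm:cyclemksum} together with Theorem~\ref{thm:scorecycle}(1): the map $S_D$ sends each cycle-equivalence class to a point of the zonotope $Z$, and I would like the preimage of each point of $Z$ to be exactly one cycle-equivalence class --- i.e.\ $S_D$ is injective on classes.  Granting that, $w_1 \not\sim w_2$ means $S_D(w_1) \neq S_D(w_2)$, so the vector $\delta = S_D(w_1) - S_D(w_2) \neq 0$.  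Since $\{D(t)\}_{t \in T}$ spans $\mathbb{R}^n$ for any spanning tree $T$ (the spanning-representation property), I can write $\delta$ as a signed combination of tree edges, hence as $\sum_i k_i D(p_i)$ for paths $p_i$ and real $k_i$; grouping by sign, there is a sequence of transactions whose net direction is $\delta$ (or its negation).  The final step is to argue that \emph{capacity constraints can be satisfied}: from $w_1$ one can actually execute a sequence of small transactions realizing $\delta$ --- this follows because the whole segment from $S_D(w_1)$ to $S_D(w_2)$, or a detour through $Z$ connecting them, stays inside the convex body $Z$, and any point of $Z$ is the score of some reachable configuration; subdividing the transactions finely enough keeps every intermediate score in $Z$.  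So some transaction in this sequence succeeds from $w_2$'s class but, reversed, behaves differently from $w_1$ --- more carefully, I would pick a transaction that takes one of the two configurations to a score point and fails from the other because going that direction would leave $Z$.

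\textbf{The main obstacle} is establishing that $S_D$ is injective on cycle-equivalence classes --- equivalently, that ``same represented score'' already implies cycle-equivalent.  Theorem~\ref{thm:scorecycle}(1) gives only one inclusion.  The proof should go: if $S_D(w_1) = S_D(w_2)$, then $\sum_{s(u,v)=1}(w_1(u,v) - w_2(u,v)) D(u,v) = 0$; the vector of coefficients $(w_1 - w_2)$ is therefore a cycle-space element (by the first defining property of a spanning representation, the kernel of $D$ on edge-vectors is exactly the cycle space), so $w_1 - w_2$ decomposes into circulations, and --- after the usual rounding/clearing argument to respect the $[0,c]$ box on each edge --- these circulations can be realized as a sequence of cyclic payments taking $w_1$ to $w_2$.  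That circulation-realizability argument (keeping each partial sum inside $[0, c(e)]$ on every edge) is the one genuinely fiddly lemma; everything else is assembling the cited theorems.  I would isolate it as a sublemma and prove it by induction on the support of $w_1 - w_2$, at each step pushing flow around one cycle in the support to zero out at least one edge.
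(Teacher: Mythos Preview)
Your proposal is essentially correct and converges on the same argument as the paper, but with two differences worth flagging.

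First, you are right to isolate injectivity of $S_D$ on cycle-equivalence classes (``same score $\Rightarrow$ cycle-equivalent'') as a genuine sublemma.  The paper's own proof simply writes ``if $x_1=x_2$, that is, $x_1$ and $x_2$ are cycle-equivalent'' and proceeds, never justifying that identification beyond the one-line claim in the section introduction.  Your conformal-decomposition sketch --- the difference $w_1-w_2$ lies in the cycle space of $G$ because a spanning representation kills exactly cycles, decompose it into sign-consistent simple-cycle flows, and push them one at a time while each coordinate stays in $[0,c(e)]$ --- is the right way to close this gap, and the feasibility check is routine once the decomposition is conformal.  So on this point your write-up is more complete than the paper's.

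Second, your exposition of the contrapositive direction is more roundabout than it needs to be, and the passage about ``executing a sequence of small transactions realizing $\delta$'' followed by ``some transaction in this sequence \ldots\ behaves differently'' does not by itself produce a distinguishing transaction.  The clean argument --- which you do reach in your final sentence --- is exactly the paper's: with $v=S_D(w_2)-S_D(w_1)\neq 0$, set $t^\ast=\sup\{t:x_2+tv\in Z\}$ and take the generalized transaction $kv$ with $k=t^\ast+\tfrac12$; convexity gives $x_1+kv\in Z$ (since $x_1+(1+t^\ast)v=x_2+t^\ast v\in Z$ and $k<1+t^\ast$) while $x_2+kv\notin Z$.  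No decomposition of $\delta$ into path-transactions is needed for this step.  Where your subdivision idea \emph{is} genuinely useful is in converting a ``generalized transaction'' $kv\in\mathbb{R}^n$ into an honest finite sequence of vertex-to-vertex transactions whose intermediate scores remain in $Z$; both you and the paper leave that conversion implicit, and your remark about walking along a spanning tree in small steps handles it.
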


\section{Induced Measures on the Configuration Space}
\label{sec:markov1}

As outlined in the introduction, we would like to understand success probability with respect to a ``random'' configuration drawn from the distribution induced by real-world transaction activity.  For this work, we will think of transactions as arriving from some external process; most transactional contexts like online shopping are driven by forces entirely external to the payment network, such as demand for consumer goods.

Our analysis is one of an idealized payment network, so we will model transactions as happening one at a time, communication links never fail, and transaction fees are zero.\footnote{The question of how to optimally set transaction fees is a fascinating open problem for future study.}  In particular, we will model transactions as occuring in sequence, drawn from some distribution.  At every timestep, we draw a new transaction by drawing a pair of vertices who will transact and then a transaction size.  If the transaction is feasible in the current network configuration, we execute the transaction and move to the resulting state.  Otherwise, we disregard the transaction.

\begin{definition} [Transaction Model]
A transaction model for a set of vertices $V$ consists of the following:

\begin{enumerate}
\item The set of pairs of vertices that transact, $X\subset V\times V$.
\item A distribution $\phi$ on $X$.  
\item For each pair $(a,b)\in X$, a continuous probability distribution $k_{a,b}(\cdot)$ on the size of a transaction between $a$ and $b$.
\end{enumerate}

This transaction model generates a transaction by using $\phi$ to choose a sender $a$ and a receiver $b$, and then using $k_{a,b}(\cdot)$ to choose a transaction size.
\end{definition}

For technical convenience, we will allow $k_{a,b}(\cdot)$ to output negative values, and sending negative money from $a$ to $b$ will be considered sending money from $b$ to $a$.  It is without loss of generality, then, to assume that only one of $(a,b)$ or $(b,a)$ is in $X$.

We would like to use a transaction model to induce a distribution on the set of cycle-equivalence classes.   In the discrete credit network case of \cite{dandekar2011liquidity,goel2015connectivity}, the set of cycle-equivalence classes was finite, and so those authors could simply sum over all classes.  However, in our continuous setting, the set of cycle-equivalence classes has infinite cardinality.  We would like to analyze the probability mass of a subset of configurations, but it is not clear how to integrate over a set of cycle-equivalence classes without building a lot of measure-theoretic infrastructure.  Instead, given a fixed spanning representation, it will be much easier to simply map the set of cycle-equivalence classes into $\mathbb{R}^n$ and analyze subsets of $\mathbb{R}^n$.  As a notational shorthand, we will say that the measure of a set of cycle-equivalence classes is simply the measure of the image of the set in $\mathbb{R}^n$ (if said image is measurable).

Given this technical difficulty, when we define the following Markov Chain, we will use as its state space the image in $\mathbb{R}^n$ of the set of all cycle-equivalence classes under a fixed spanning representation.

\begin{definition} [Markov Chain 1]
  Let $G=(V, E, c)$ be a payment network, let $D$ be a spanning representation, let $Z\subset\mathbb{R}^n$ be the image under $D$ of the set of cycle-equivalence classes of $G$, and let $\Xi$ be some transaction model.  The induced Markov chain will be defined as follows:

  Start at some point $z_0\in Z$.

  At each timestep, draw a transaction between vertices $a$ and $b$ of size $k$ from $\Xi$.  If the transaction is feasible from the current state $z_t$, then perform the transaction and update the state of the network accordingly ($z_{t+1}=z_t-kD(a,b)$.  Otherwise, remain in the same state ($z_{t+1}=z_t$).

  The induced distribution on $Z$ is the stationary measure of this Markov chain (if a stationary measure exists and is unique).

\end{definition}

\begin{theorem}
\label{thm:mc1invariant}
The above Markov Chain has an invariant probability measure.

\end{theorem}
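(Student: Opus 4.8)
The plan is to recognize Markov Chain~1 as a \emph{Feller} (weakly continuous) Markov kernel on the compact metric space $Z$ and then apply the Krylov--Bogolyubov theorem, which guarantees that every Feller chain on a compact metric space has at least one invariant probability measure. The first hypothesis of that theorem is already available: by Theorem~\ref{thm:cyclemksum}, $Z$ is the Minkowski sum of the finitely many vectors $v_{(x,y)}=c(x,y)D(x,y)$, hence the image of the cube $[0,1]^{\vert E\vert}$ under a linear map, hence a compact convex body; since $D$ is a spanning representation, the generators indexed by a spanning tree already span $\mathbb{R}^n$, so $Z$ is full-dimensional (it has nonempty interior). Consequently the space $\mathcal{P}(Z)$ of Borel probability measures on $Z$ is weak-$*$ compact.

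Next I would write the one-step kernel down explicitly. By the correspondence developed in Section~\ref{sec:configurations}, a transaction between $a$ and $b$ of size $k$ is feasible from a state $z\in Z$ exactly when $z-kD(a,b)\in Z$, in which case the chain moves to $z-kD(a,b)$ and otherwise stays at $z$; integrating against $\phi$ and the size distributions $k_{a,b}$ gives, for bounded continuous $f$,
\[
(Pf)(z)=\sum_{(a,b)\in X}\phi(a,b)\left[\int_{\{k\,:\,z-kD(a,b)\in Z\}} f\big(z-kD(a,b)\big)\,k_{a,b}(dk)+\Big(1-k_{a,b}\big(\{k:z-kD(a,b)\in Z\}\big)\Big)f(z)\right].
\]
Measurability of $z\mapsto(Pf)(z)$ is routine, so $P$ is a genuine Markov kernel on the Borel sets of $Z$. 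For the Feller property I would fix a pair $(a,b)$, note that $w:=D(a,b)\neq 0$ (a single edge is not a cycle, so by the definition of a spanning representation its direction is nonzero), and observe that by convexity and compactness of $Z$ the ``feasible chord'' $\{k:z-kw\in Z\}$ is a closed bounded interval $[\alpha(z),\beta(z)]\ni 0$. As $z_j\to z$ in $Z$, the reflected-step map $k\mapsto\rho(z,k)$ (equal to $z-kw$ on the chord and to $z$ off it) can be discontinuous only at the at most two chord endpoints $k\in\{\alpha(z),\beta(z)\}$; since each $k_{a,b}$ is a continuous (non-atomic) distribution, that is a null set, so dominated convergence (with dominating constant $\Vert f\Vert_\infty$) yields $(Pf)(z_j)\to(Pf)(z)$, i.e.\ $Pf\in C(Z)$.

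The main obstacle is the point tucked inside that last argument: I need $\alpha(\cdot)$ and $\beta(\cdot)$ to depend \emph{continuously} on $z$, including the awkward case where $z\in\partial Z$ and $w=D(a,b)$ is parallel to a face of the zonotope, so that the feasible chord can lie entirely in $\partial Z$. This is where convexity and compactness of $Z$ do the real work: from the closedness of $Z$ one immediately gets $\alpha$ lower semicontinuous and $\beta$ upper semicontinuous, and the reverse semicontinuities follow from the standard convexity fact that a chord through a convex body with nonempty interior is approximable from nearby points; together with the non-atomicity of $k_{a,b}$ this ensures that the mass $k_{a,b}([\alpha(z_j),\beta(z_j)])$ of the moving interval converges and that no probability ``leaks'' across the boundary as $z_j\to z$. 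Once $P$ is known to be Feller on the compact space $Z$, Krylov--Bogolyubov concludes: the averaged occupation measures $\frac1T\sum_{t=1}^{T}\delta_{z_0}P^{t}$ have a weak-$*$ limit point, and any such limit point is $P$-invariant, which gives the claimed invariant probability measure. (The statement only asserts existence; uniqueness, which the definition of the induced distribution requires, would need a separate irreducibility or ergodicity argument.)
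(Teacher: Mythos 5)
Your proposal is correct and takes essentially the same route as the paper: the paper likewise observes that the state space $Z$ is compact, argues that the transition kernel is weakly continuous, and invokes a Feller-existence result (Theorem 1.10 of Eberle's notes, i.e.\ the Krylov--Bogolyubov argument). The only difference is one of detail --- where the paper asserts in a single line that the kernel is a weighted sum of continuous bounded functions, you verify the Feller property explicitly via the chord-endpoint functions of the zonotope and the non-atomicity of the size distributions $k_{a,b}$, which is precisely the step that one-line claim glosses over.
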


We are primarily interested in the case where a transaction model induces a unique stationary measure.  The following condition is sufficient but not necessary; there are many similar sufficient conditions.

\begin{theorem}
\label{thm:mc1uniqueinvariant}
Let $G=(V,E)$ be a graph, and let $\Xi=(X, \phi, \lbrace k_{a,b}\rbrace_{(a,b)\in X})$ be a transaction model on $V$.

Suppose that for each pair $(a,b)\in X$, the support of $k_{a,b}$ contains the interval $[-\varepsilon, \varepsilon]$ for some $\varepsilon>0$.  Then Markov Chain 1 has a unique stationary distribution if the graph $(V, X)$ is connected.

\end{theorem}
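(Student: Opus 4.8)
The plan is to show that Markov Chain 1 is an irreducible, aperiodic Markov chain on a compact state space (the zonotope $Z$), so that the invariant measure guaranteed by Theorem \ref{thm:mc1invariant} is in fact unique. The hypotheses give us exactly the two ingredients we need: connectivity of $(V,X)$ will deliver irreducibility (any cycle-equivalence class is reachable from any other), and the fact that each $k_{a,b}$ has $[-\varepsilon,\varepsilon]$ in its support will deliver a uniform ``laziness''/overlap property that kills periodicity and, more importantly, lets us run a coupling or a Doeblin-type minorization argument.

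First I would set up the reachability claim. Fix two points $z, z' \in Z$. Since $Z$ is the Minkowski sum of the edge vectors $v_{(x,y)} = D(x,y)c(x,y)$ (Theorem \ref{thm:cyclemksum}), both $z$ and $z'$ are represented scores of configurations, and by Theorem \ref{thm:scorecycle} executing a transaction from $a$ to $b$ of size $k$ translates the state by $-kD(a,b)$. I would show that by a sequence of small transactions along edges of the connected graph $(V,X)$ — each of size at most $\varepsilon$ in absolute value, each therefore feasible with positive probability when the current escrow on that directed edge is positive, which we can always arrange by first pushing a little flow the other way if needed — one can move the escrow configuration underlying $z$ to one underlying $z'$. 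Concretely, pick any configuration $w$ with $S_D(w) = z$ and any $w'$ with $S_D(w') = z'$; the difference $w' - w$ is a circulation plus a routable adjustment, and because $(V,X)$ is connected we can realize the net change as a sum of path-flows along $X$-edges, then chop each path-flow into steps of size $<\varepsilon$. Each such step is feasible from the current state with strictly positive probability (the pair $(a,b)$ is drawn with probability $\phi(a,b)>0$ and the size lands in a positive-measure subinterval of $[-\varepsilon,\varepsilon]$ on which the transaction succeeds), so with positive probability the chain traverses the whole sequence. This establishes that from any state the chain can reach (a neighborhood of) any target state in finitely many steps.

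The main obstacle is turning this ``positive-probability reachability'' into genuine uniqueness of the stationary measure on an uncountable (continuous) state space, where irreducibility alone is not enough — one needs something like Harris recurrence or a minorization condition. The cleanest route is to exhibit a small set: I would argue that there is an $n$, a state $z^\star$ (e.g. a vertex of $Z$, where a whole simplex of transactions is infeasible so the chain has an atom of ``stay put'' probability), a ball $B$ around $z^\star$, and a constant $\delta>0$ such that for every starting state $z$, the $n$-step transition kernel $P^n(z,\cdot)$ dominates $\delta$ times a fixed probability measure supported near $z^\star$. The reachability argument above gives, for each $z$, a finite path to near $z^\star$; a compactness argument on $Z$ (cover $Z$ by finitely many neighborhoods on each of which a common path prefix works, using continuity of the transition probabilities in the transaction sizes and the fact that the $k_{a,b}$ have densities) upgrades this to a uniform-in-$z$ bound with a common $n$ and $\delta$. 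Standard Markov chain theory (Doeblin's condition / Harris's theorem, as in Meyn–Tweedie) then gives a unique stationary distribution, with geometric convergence as a bonus. I would also note the aperiodicity is automatic: at $z^\star$ the ``do nothing'' transition has positive probability, so the chain is strongly aperiodic. The routine parts — that $Z$ is compact, that the transition kernel is Feller, that small transactions along a connected graph can realize any configuration change — I would relegate to short lemmas or the appendix.
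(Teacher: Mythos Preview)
Your proposal is correct and follows a standard modern route (Doeblin minorization plus a compactness/uniformization step, in the spirit of Meyn--Tweedie), but the paper takes a more direct and shorter path. The paper invokes Harris's 1956 theorem: it suffices that every measurable $E\subset Z$ of positive measure is visited infinitely often in expectation from any starting point, which reduces to showing $p^k(z,E)>0$ for some $k$. The paper's key observation is that, because each $k_{a,b}$ has $[-\varepsilon,\varepsilon]$ in its support and the directions $\{D(a,b):(a,b)\in X\}$ span $\mathbb{R}^n$ when $(V,X)$ is connected, after $|X|$ steps the support of the kernel started at $z$ contains an $\varepsilon$-ball around $z$ (intersected with $Z$), and hence after $k|X|$ steps the support contains a $k\varepsilon$-ball; since $Z$ is bounded, for large $k$ this covers all of $Z$. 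This growing-support argument replaces your entire compactness step: rather than fixing a target $z^\star$ and uniformizing over starting points, the paper lets the reachable set expand until it engulfs any $E$ of positive measure. Your approach yields uniform (geometric) ergodicity as a bonus and is arguably cleaner about boundary behavior, but the paper's argument is shorter and needs no cover of $Z$ or choice of small set.
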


Motivated by Theorem \ref{thm:mc1uniqueinvariant}, we call a transaction model that satisfies these conditions a {\it connected} transaction model.

In the next section, we will primarily study the case where the unique stationary measure is uniform.

\begin{theorem}
\label{thm:symmetric}

Let $\Xi=(X, \phi, \lbrace k_{a,b}\rbrace_{(a,b)\in X})$ be a connected transaction model.

If each $k_{a,b}$ is symmetric about $0$, then the induced stationary measure is uniform.

\end{theorem}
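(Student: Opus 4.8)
The plan is to exhibit the normalised restriction of Lebesgue measure to $Z$ as \emph{a} stationary measure of Markov Chain 1, and then invoke Theorem \ref{thm:mc1uniqueinvariant} (which applies because $\Xi$ is connected) to conclude it is \emph{the} stationary measure. Write $\lambda$ for $n$-dimensional Lebesgue measure. Since $Z$ is a zonotope (Theorem \ref{thm:cyclemksum}) it is bounded, and since, for any spanning tree $T$, the directions $\{D(t)\}_{t\in T}$ span $\mathbb{R}^n$, the zonotope $Z$ is full-dimensional; hence $0<\lambda(Z)<\infty$ and $\bar\lambda := \lambda(\cdot\cap Z)/\lambda(Z)$ is a probability measure on $Z$. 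Recall from Section \ref{sec:configurations} (this is precisely what makes Markov Chain 1 well defined on the state space $Z$) that a transaction of size $k$ from $a$ to $b$ is feasible from a state $z\in Z$ exactly when $z-kD(a,b)\in Z$. Thus the transition kernel of Markov Chain 1 is a Metropolis-type kernel: from $z$, draw $(a,b)\sim\phi$ and $k\sim k_{a,b}$, propose $z'=z-kD(a,b)$, and accept iff $z'\in Z$.

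It suffices to show that each kernel $P_{a,b}$ (the kernel that always uses the fixed pair $(a,b)$) preserves $\bar\lambda$: the full kernel is the mixture $P=\sum_{(a,b)\in X}\phi(a,b)P_{a,b}$, and if $\bar\lambda P_{a,b}=\bar\lambda$ for every $(a,b)$ then $\bar\lambda P=\bar\lambda$. Fix $(a,b)$ and set $u=D(a,b)$, which is nonzero (a genuine path is never a cycle, so $u\neq 0$ by property 1 of a spanning representation). Decompose $\mathbb{R}^n=\operatorname{span}(u)\oplus u^\perp$ and write a generic point as $c+t\hat u$ with $c\in u^\perp$ and $\hat u=u/\lVert u\rVert$. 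The kernel $P_{a,b}$ never changes the $u^\perp$-coordinate $c$: it only moves $c+t\hat u$ to $c+t\hat u-ku=c+(t-k\lVert u\rVert)\hat u$. On the line $\ell_c=\{c+t\hat u:t\in\mathbb{R}\}$, convexity of $Z$ makes $Z\cap\ell_c$ a closed interval $I(c)$ (empty or a single point only for $c$ outside a full-dimensional set, which contributes nothing to $\lambda$). Restricted to $\ell_c$, $P_{a,b}$ is exactly the one-dimensional Metropolis chain on $I(c)$ that proposes the increment $-k\lVert u\rVert$ with $k\sim k_{a,b}$ and rejects any proposal leaving $I(c)$.

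The one-dimensional step is where symmetry of $k_{a,b}$ is used. Because $k_{a,b}$ is symmetric about $0$, the increment $-k\lVert u\rVert$ has a density symmetric about $0$, so the line-proposal is symmetric; and a symmetric-proposal Metropolis chain whose target is uniform on $I(c)$ has acceptance probability $1$ inside $I(c)$ and $0$ outside — which is precisely the accept/reject rule above. Such a chain is reversible with respect to its uniform target, so one-dimensional Lebesgue measure on $I(c)$ is stationary for the line chain. Integrating this fact over $c\in u^\perp$ and applying Fubini — writing $\lambda$ in the $(t,c)$ coordinates as the product of one-dimensional Lebesgue measures and using $\lambda(A)=\int_{u^\perp}\bigl|\{t:c+t\hat u\in A\}\bigr|\,dc$ for measurable $A\subseteq Z$ — yields $\int_Z P_{a,b}(z,A)\,\lambda(dz)=\lambda(A)$, i.e.\ $P_{a,b}$ preserves $\lambda|_Z$, hence $\bar\lambda$. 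Mixing over $(a,b)$ shows $\bar\lambda$ is stationary for Markov Chain 1, and uniqueness from Theorem \ref{thm:mc1uniqueinvariant} forces the stationary measure to equal $\bar\lambda$, which is uniform.

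The step I expect to need the most care is the measure-theoretic bookkeeping that connects the \emph{singular} proposal kernel — supported, from any state, on a one-dimensional affine line through that state — to the \emph{$n$-dimensional} target $\lambda$. This is the standard subtlety in correctness proofs for Gibbs- and hit-and-run-type samplers; here it is resolved cleanly by disintegrating $\lambda|_Z$ along the fixed direction $u=D(a,b)$ into uniform measures on the interval slices $I(c)$, which is exactly the decomposition $P_{a,b}$ respects. Everything else — boundedness and full-dimensionality of the zonotope $Z$, the equivalence "feasible $\iff z-kD(a,b)\in Z$", and the one-dimensional Metropolis computation — is routine given the machinery of Sections \ref{sec:configurations} and \ref{sec:markov1}.
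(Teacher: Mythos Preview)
Your proof is correct and takes a genuinely different route from the paper. The paper argues directly in $n$ dimensions: for a fixed transaction vector $v$ it defines the ``succeed and land in $A$'' set $A_v^1=\{x:x-v\in A\}\cap Z$ and the ``fail but already in $A$'' set $A_v^2=\{x:x-v\notin Z\}\cap A$, observes via a single translation by $v$ that $\mu(A_{-v}^1)+\mu(A_v^2)=\mu(A)$, and then pairs the $+c$ and $-c$ contributions using symmetry of $k_{a,b}$ to get $\int_Z p(dz,A)=\mu(A)$. You instead foliate: decompose the kernel as a $\phi$-mixture of single-direction kernels $P_{a,b}$, disintegrate $\lambda|_Z$ along lines parallel to $D(a,b)$, and recognise the line-restricted chain as a one-dimensional symmetric-proposal Metropolis sampler targeting the uniform distribution on an interval, which is reversible with respect to that uniform. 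Your approach buys a clean connection to standard MCMC machinery (Gibbs/hit-and-run correctness via conditional-measure invariance) and is modular in the pair $(a,b)$; the paper's approach is shorter and entirely self-contained, needing only one translation identity rather than a disintegration-plus-Fubini argument. At bottom both are the same translation-invariance fact, but yours packages it as ``1D Metropolis preserves uniform on an interval'' while the paper packages it as the set identity $\mu(A_{-v}^1)+\mu(A_v^2)=\mu(A)$.
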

Motivated by Theorem \ref{thm:symmetric}, we call a transaction that satisfies these conditions a {\it symmetric, connected} transaction model.

When the stationary measure is unique, the natural definition of transaction success probability is well-defined.

\begin{definition}
Given a spanning representation $D$, let $Z$ be the set of configurations represented in $\mathbb{R}^n$ and let $\pi$ be some density measure over $Z$.  Then the {\it liquidity} of the network with regard to a transaction $v$ is the relative measure of the image of the set of configurations where $v$ succeeds, $\frac{\int_Z \pi(z)\mathds{1}\lbrace z-v\in Z\rbrace dz}{\int_Z\pi(z)dz}$.

\end{definition}

\begin{theorem}
  
\label{thm:reliability}

The liquidity of a network with respect to a particular transaction does not depend on the choice of spanning representation. 

\end{theorem}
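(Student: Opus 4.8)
The plan is to exhibit, for any two spanning representations $D$ and $D'$, a single linear isomorphism $L$ of $\mathbb{R}^n$ under which every ingredient of the liquidity definition ($Z$, the transaction vector $v$, and the measure $\pi$) transforms equivariantly, so that the Jacobian of $L$ cancels in the numerator and denominator of the liquidity ratio.

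First I would build $L$. Fix a spanning tree $T=\{t_1,\dots,t_n\}$. By property 2 of a spanning representation, both $\{D(t_i)\}_i$ and $\{D'(t_i)\}_i$ are bases of $\mathbb{R}^n$, so there is a unique linear isomorphism $L$ with $L(D(t_i))=D'(t_i)$ for all $i$. The key claim is that $L(D(e))=D'(e)$ for \emph{every} edge $e$, not just the tree edges. For a non-tree edge $e$, the fundamental cycle $C$ of $e$ in $T$ is a closed walk using $e$ together with tree edges; by property 1 the signed sum of $D$ around $C$ vanishes, which expresses $s(e)D(e)$, hence $D(e)$, as a $\pm 1$-combination of the $D(t_i)$ with $t_i\in C$. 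Crucially the combinatorial data of $C$ (which tree edges it uses and with which signs relative to $s$) is the same for both representations, so exactly the same $\pm 1$-combination expresses $D'(e)$ in terms of the $D'(t_i)$; linearity of $L$ then yields $L(D(e))=D'(e)$.

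Next I would propagate this identity through Section \ref{sec:configurations}. Since $S_{D'}(w)=\sum_{s(u,v)=1}w(u,v)D'(u,v)=L\!\big(\sum_{s(u,v)=1}w(u,v)D(u,v)\big)=L(S_D(w))$, the represented-score maps satisfy $S_{D'}=L\circ S_D$, so the image of the set of cycle-equivalence classes satisfies $Z_{D'}=L(Z_D)$ (equivalently, by Theorem \ref{thm:cyclemksum}, $Z_{D'}$ is the Minkowski sum of $\{c(e)D'(e)\}=L(\{c(e)D(e)\})$). Likewise a transaction corresponding to $v=kD(p)$ under $D$ corresponds to $kD'(p)=L(v)$ under $D'$. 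Finally, Markov Chain 1 is defined intrinsically on cycle-equivalence classes and feasibility of a transaction is an intrinsic property; the maps $D,D'$ merely relabel its state space via $S_D,S_{D'}$, which differ by $L$, so the chain under $D'$ is the pushforward by $L$ of the chain under $D$, and therefore the induced measure satisfies $\pi_{D'}=L_*\pi_D$, i.e. in density form $\pi_{D'}(L(z))=\pi_D(z)/|\det L|$.

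It then remains to substitute $z'=L(z)$ in the liquidity integral for $D'$: the indicator $\mathds{1}\{z'-L(v)\in Z_{D'}\}$ becomes $\mathds{1}\{z-v\in Z_D\}$ because $Z_{D'}=L(Z_D)$, the density contributes a factor $|\det L|^{-1}$, and $dz'=|\det L|\,dz$; the two factors cancel, so the numerator of the $D'$-liquidity equals that of the $D$-liquidity, and identically for the denominator, leaving the ratio unchanged. The main point to get right is the bookkeeping in the third paragraph — verifying that the fundamental-cycle coefficients are genuinely representation-independent, and hence that $\pi$ transforms as an honest pushforward rather than in some representation-dependent way; granting that, the conclusion is a one-line change of variables.
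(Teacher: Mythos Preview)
Your proposal is correct and follows essentially the same approach as the paper's proof: both construct a linear isomorphism $L$ between the two representations by matching them on a spanning tree, observe that $L$ intertwines all the relevant data (edge vectors, configuration images, transaction vectors, and the induced density), and then conclude via a change of variables in which $|\det L|$ cancels between numerator and denominator. If anything, your fundamental-cycle argument for why $L$ extends to non-tree edges and your justification that $\pi$ transforms as a pushforward are more explicit than the paper's treatment.
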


These theorems show that under small assumptions on transaction distributions, the distribution on configurations of a payment network is a well-defined topic of discussion.

\section{Liquidity Analysis and Monotonicity}
\label{sec:monotonicity}

One question that an agent operating in a payment network might ask is how adding escrow to one edge will affect transaction success rate across another edge.  It would be bad for a payment network if bad actors could add escrow and in doing so decrease liquidity between well-behaved agents.  

The authors of \cite{goel2015connectivity} were concerned with this question, which they dubbed the ``Monotonicity Conjecture.''  In their work, they observe that this conjecture is equivalent to the well-studied ``Negative Correlation'' conjecture about the set of forests of a graph.  To summarize, a matroid on a set of elements $X$ is said to be negatively correlated if, for any two elements $x,y\in X$, the probability that $x$ is in a random basis is greater than the probability that $x$ is in a random basis that also contains $y$.  \cite{as1992balanced} showed that this conjecture implies a polynomial time random sampling algorithm for matroid bases.  

Not all matroids are negatively correlated, however.  And although the set of forests is not a matroid, much intellectual energy has been devoted to the negative correlation of forests question -- for example, \cite{grimmett2004negative,cocks2008correlated,semple2008negative}.  

We show in this section that for the monotonicity conjecture on credit networks, relaxing the (somewhat unnatural) restriction that transactions must have discrete sizes can actually make the monotonicity question tractable.  Below we reproduce this conjecture in the context of our network model.

\begin{conjecture}[Monotonicity]

Let $G=(V,E)$ be a credit network with edge capacities $c(\cdot)$, let $\tau$ be a transaction from a vertex $x$ to another vertex $y$ of size $k$, and let $\pi$ be the uniform stationary measure induced by a symmetric, connected transaction model.

For any pair of vertices $(w,z)\in V\times V$, increasing $c(w,z)$ in $G$ should not decrease the liquidity of $\tau$ with respect to $\pi$.

\end{conjecture}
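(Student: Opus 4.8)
The idea is to push the whole statement through the zonotope picture and then reduce it to Rayleigh's monotonicity law for effective resistances. Fix a spanning representation $D$; by Theorem~\ref{thm:cyclemksum} the image of the configuration space is the zonotope $Z_G=\bigoplus_{e}[0,\,c(e)D(e)]$, the transaction $\tau$ corresponds to the vector $\upsilon=k\,D(x,y)$, and because the stationary measure $\pi$ is uniform its liquidity is the normalized covariogram value
\[
  \mathrm{liq}_G(\tau)\;=\;\frac{\mathrm{vol}_n\!\big(Z_G\cap(Z_G+\upsilon)\big)}{\mathrm{vol}_n(Z_G)} .
\]
The first step is the elementary observation that raising $c(w,z)$ by $\delta$ only turns $Z_G$ into $Z_G\oplus[0,\,\delta D(w,z)]$: if $(w,z)\in E$ the generator $c(w,z)D(w,z)$ is lengthened along its own direction, and if $(w,z)\notin E$ a new segment is appended, where $D(w,z)$ means $D(p)$ for any $w$--$z$ path, well defined by Theorem~\ref{thm:scorecycle}. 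So the conjecture asks: does Minkowski-adding a segment to $Z_G$ never decrease the normalized covariogram at $\upsilon$?

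Second, I would evaluate the two volumes combinatorially. The representation produced in Theorem~\ref{thm:goodreps} is just the standard (totally unimodular) representation of the graphic matroid --- every non-tree edge is sent to a $\pm1$ incidence vector of its fundamental path --- so $|\det D_S|\in\{0,1\}$ for every $n$-element $S\subseteq E$. Hence $\mathrm{vol}_n(Z_G)=\sum_{T}\prod_{e\in T}c(e)$ summed over spanning trees, which by the weighted Matrix--Tree theorem is a reduced weighted-Laplacian determinant with conductances $c(e)$. Applying the same unimodularity to the graphic matroid of $G+(x,y)$ shows that $\mathrm{vol}_{n-1}$ of the projection of $Z_G$ along $D(x,y)$ equals $|D(x,y)|^{-1}\sum_{F}\prod_{e\in F}c(e)$, summed over two-component spanning forests $F$ of $G$ that separate $x$ from $y$. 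Dividing, the ratio of this projected volume to $\mathrm{vol}_n(Z_G)$ is exactly $R_{xy}/|D(x,y)|$, where $R_{xy}$ is the effective resistance between $x$ and $y$ when each edge $e$ is a conductor of conductance $c(e)$ --- the ``effective resistance'' already alluded to in Section~\ref{sec:monotonicity}.

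Third comes an explicit liquidity formula, obtained by slicing $Z_G$ by the affine lines parallel to $D(x,y)$. When $(x,y)\in E$ the generating segment $[0,\,c(x,y)D(x,y)]$ is itself parallel to these lines, so the slice of $Z_G$ over a base point equals the corresponding slice of $\bigoplus_{e\neq(x,y)}[0,\,c(e)D(e)]$ extended by length $c(x,y)|D(x,y)|$, and intersecting with the $\upsilon$-translate trims length $k|D(x,y)|$ off each slice. For $0\le k\le c(x,y)$ no slice becomes empty, so $\mathrm{vol}_n(Z_G\cap(Z_G+\upsilon))=\mathrm{vol}_n(Z_G)-k|D(x,y)|$ times the base area, which by Step~2 gives the clean identity $\mathrm{liq}_G(\tau)=1-kR_{xy}$ (for $k>c(x,y)$ the same slicing recurses onto $G\setminus(x,y)$). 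For a transaction between an arbitrary pair $x,y$, $\upsilon$ is no longer a generator direction, but the first-order variation of the covariogram is minus the projected volume (Cauchy's projection formula / the divergence theorem), yielding $\mathrm{vol}_n(Z_G\cap(Z_G+\upsilon))=\mathrm{vol}_n(Z_G)-k|D(x,y)|\cdot\mathrm{vol}_{n-1}(\text{proj})+O(k^2)$, i.e. $\mathrm{liq}_G(\tau)=1-kR_{xy}+O(k^2)$.

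Finally, monotonicity. Increasing $c(w,z)$ increases the conductance of edge $(w,z)$ (or creates it from conductance $0$), so by Rayleigh's monotonicity law $R_{xy}$ cannot increase. For a direct-connection transaction with $k\le c(x,y)$ the exact identity $\mathrm{liq}_G(\tau)=1-kR_{xy}$ then immediately gives the conjecture; for a small transaction between an arbitrary pair, the leading term $1-kR_{xy}$ is non-decreasing and one finishes with a uniform bound on the $O(k^2)$ remainder (the degenerate case, where the $x$--$y$ electrical current through $(w,z)$ vanishes and $R_{xy}$ is unchanged, being handled separately), giving monotonicity for all $k$ below a threshold depending only on $G$. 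I expect the real fight to be exactly where this program stops: for a transaction of large size between non-adjacent endpoints, $\upsilon$ is a generic direction and $\mathrm{vol}_n(Z_G\cap(Z_G+\upsilon))$ is the covariogram of a general zonotope, for which there is no closed form and into which the old negative-correlation obstruction re-enters --- so this route is unlikely to settle the unrestricted conjecture, only the two cryptographically relevant regimes above.
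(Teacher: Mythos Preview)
The statement you are addressing is a \emph{conjecture} that the paper does not prove in full generality, and you correctly recognize this in your last paragraph: your plan settles only the two partial regimes the paper actually establishes --- transactions across an existing edge of sufficient capacity, and transactions of sufficiently small size --- and for those two regimes your route is essentially the paper's. Both of you push everything to the zonotope $Z_G$, use total unimodularity of the graphic representation to identify $\mathrm{vol}_n(Z_G)$ with $\Gamma(c)$ and the projection along $D(x,y)$ with $\Gamma_{x=y}(c)$, obtain the exact failure probability $k\,\Gamma_{x=y}(c)/\Gamma(c)=kR_{xy}$ in the direct-edge case (your slicing argument is the paper's Theorem~\ref{thm:mkdiffgood}/Corollary~\ref{thm:volume_resistance}), and reduce monotonicity in that case to the Rayleigh inequality $\Gamma_{x=y,a=b}(c)\,\Gamma(c)\le\Gamma_{a=b}(c)\,\Gamma_{x=y}(c)$ for the tree matroid.

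There is one place where your sketch is looser than the paper. In the small-$k$ case you write $\mathrm{liq}=1-kR_{xy}+O(k^2)$ and then say ``one finishes with a uniform bound on the $O(k^2)$ remainder'' to get a threshold $\hat k$ depending only on $G$. But the remainder also depends on $h$, the amount of capacity added to $(w,z)$, and you need the threshold to work for \emph{all} $h\ge 0$ simultaneously; a naive second-order bound on the covariogram only gives a threshold depending on $h$. The paper closes this with a structural observation you omit: since adding capacity $h$ along $D(w,z)$ amounts to Minkowski-adding the segment $[0,\,hD(w,z)]$, the post-increase success volume and total volume take the form $A+hC$ and $B+hD$ respectively, with $C$ and $D$ the relevant projections in direction $D(w,z)$ --- in particular $C/D$ is independent of $h$. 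This makes ``liquidity is non-decreasing at $k$ for one $h>0$'' equivalent to ``liquidity is non-decreasing at $k$ for every $h>0$'', and is exactly what upgrades a per-$h$ threshold to a single $\hat k(G)$. Without this constancy step your small-$k$ argument is incomplete.
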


\subsection{Liquidity Analysis}

To move towards a study of the monotonicity conjecture, we first prove some theorems about general liquidity in credit networks in the continuous model.

As before, to avoid the need to set up measure-theoretic infrastructure on the set of cycle-equivalence classes, we will for the rest of this section measure the probability of a set of configurations as the relative measure of its image in $\mathbb{R}^n$ under the map induced by a spanning representation.  When the measure $\pi$ is uniform, the measure of a set in $\mathbb{R}^n$ is simply its volume.  

Therefore, as a notational shorthand, we take the ``volume'' of a set of cycle-equivalence classes to mean the volume in $\mathbb{R}^n$ of the image of the set of cycle-equivalence classes under the map induced by a fixed spanning representation.  As per Theorem \ref{thm:reliability}, liquidity metrics are not affected by choice of spanning representation.

As mentioned in Theorem \ref{thm:cyclemksum}, the image of the set of configurations is the Minkowski sum of a particular set of vectors.  It turns out that the volume of this Minkowski sum is closely related to the undirected graphical matroid of the payment network.  The connection comes through the generating polynomial for the matroid, which we now define for completeness.

\begin{definition}

  Let $G(V,E)$ be a graph, and let $T(G)$ be the set of all spanning trees of $G$.  Associate with every $e\in E$ a real variable $y_e$.

  The {\it generating polynomial} of $T(G)$ is $\Gamma(y)=\Sigma_{T\in T(G)} \Pi_{e\in T}y_e$.

  We will also denote as $\Gamma_e(y)$ the summation over sets $S$ such that $S\cup \lbrace e\rbrace$ is a spanning tree, or equivalently, $\frac{\partial}{\partial y_e}\Gamma(y)$.

  More generally, we will denote with $\Gamma_{a=b}(y)$ the generating polynomial of the graph where vertex $a$ is identified with vertex $b$.  If $e=(a,b)$, then $\Gamma_{a=b}=\Gamma_e$.
\end{definition}

\begin{theorem}
\label{thm:zonovolume}
  Let $G=(V, E)$ be a payment network with edge capacities $c(\cdot)$.

  Suppose that a spanning representation $\hat{D}$ maps some tree to the standard basis.
  Then the volume of the image of the set of cycle-equivalence classes of configurations is equal to $\Gamma (c(e_1),...,c(e_m))$.
\end{theorem}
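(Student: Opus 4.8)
The plan is to identify the image $Z$ with a zonotope and compute its volume via the classical zonotope volume formula, then match the resulting sum with $\Gamma$ term by term. By Theorem~\ref{thm:cyclemksum}, $Z$ is the Minkowski sum generated by the vectors $v_e = c(e)\,\hat{D}(e)$, one for each edge $e \in E$ taken in its positive orientation; that is, $Z$ is a zonotope.

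First I would invoke the standard fact that the volume of the zonotope generated by $v_1,\dots,v_m \in \mathbb{R}^n$ equals $\sum_{S} \lvert \det V_S\rvert$, the sum over all $n$-element subsets $S \subseteq \{1,\dots,m\}$, where $V_S$ is the matrix with columns $\{v_i : i \in S\}$. (If a self-contained argument is preferred, this follows from the usual decomposition of a zonotope into translated parallelepipeds, one per basis of the generating vectors; I would simply cite the formula.) Substituting $v_e = c(e)\hat{D}(e)$ and pulling scalars out of each determinant gives
\[
\operatorname{vol}(Z) \;=\; \sum_{\substack{S \subseteq E \\ \lvert S\rvert = n}} \Bigl(\,\prod_{e\in S} c(e)\Bigr)\,\bigl\lvert \det \hat{D}_S\bigr\rvert,
\]
where $\hat{D}_S$ is the $n\times n$ matrix with columns $\{\hat{D}(e): e\in S\}$.

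The combinatorial core — and the step I expect to be the main obstacle — is to prove that $\bigl\lvert\det \hat{D}_S\bigr\rvert$ equals $1$ when $S$ is a spanning tree of $G$ and $0$ otherwise. If $S$ contains a cycle, the first defining property of a spanning representation (the direction of a cycle is $0$) yields a nontrivial linear dependence among $\{\hat{D}(e): e \in S\}$, so $\det\hat{D}_S = 0$. For spanning trees, let $T_0$ be the tree that $\hat{D}$ sends to the standard basis. Walking the fundamental cycle of any non-tree edge $e$ with respect to $T_0$, and using that cycles have direction $0$, forces $s(e)\hat{D}(e)$ to be a $\pm 1$-combination of the $\hat{D}(t)$ for the tree edges $t$ on the fundamental path; since $\hat{D}(T_0)$ is the standard basis, $\hat{D}(e)$ therefore has all coordinates in $\{0,\pm 1\}$. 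Hence the matrix with columns $\{\hat{D}(e): e \in E\}$ (rows indexed by $T_0$) has the form $[\,I \mid N\,]$ with $N$ a signed fundamental-circuit (``network'') matrix, which is totally unimodular; consequently every $n \times n$ submatrix has determinant in $\{0,\pm 1\}$, and it is $\pm 1$ precisely when the selected columns are independent, i.e.\ precisely when $S$ is a basis of the graphic matroid, i.e.\ a spanning tree of $G$. The delicate point is verifying that the spanning-representation axioms genuinely pin the fundamental-cycle coefficients to $\pm 1$ — they do, because $\hat{D}$ is the identity on $T_0$ and cycles sum to zero — so that total unimodularity of network matrices applies.

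Combining the two steps, only spanning trees survive in the sum, each with coefficient $1$, and
\[
\operatorname{vol}(Z) \;=\; \sum_{S \in T(G)} \prod_{e \in S} c(e) \;=\; \Gamma\bigl(c(e_1),\dots,c(e_m)\bigr).
\]
Finally, I would note that connectedness of $G$ ensures $T(G)\neq\varnothing$, so $Z$ is full-dimensional and the identity is not vacuous.
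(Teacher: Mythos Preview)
Your argument is correct and follows the same overall skeleton as the paper: invoke the zonotope volume formula $\operatorname{vol}(Z)=\sum_{|S|=n}\lvert\det V_S\rvert$ (the paper cites Shephard), factor out the capacities by multilinearity, and reduce to showing $\lvert\det \hat D_S\rvert\in\{0,1\}$ with value $1$ exactly when $S$ is a spanning tree.

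The one substantive difference is in how you establish the determinant claim for spanning trees. You observe that, in the $T_0$-basis, the column matrix of $\hat D$ is $[\,I\mid N\,]$ with $N$ the signed fundamental-circuit (network) matrix, and then appeal to total unimodularity of network matrices to conclude all $n\times n$ minors lie in $\{0,\pm 1\}$. The paper instead proves $\lvert\det A\rvert=\lvert\det B\rvert$ for any spanning tree $A$ and the standard-basis tree $B$ by an explicit basis-exchange induction: at each step it swaps in one $b_{i+1}$, uses that the unique cycle created has $\pm 1$ coefficients under $\hat D$, and expands by multilinearity to see the determinant is preserved up to sign. Your route is shorter if one is willing to import the TU machinery; the paper's route is self-contained and makes the role of the ``cycles sum to zero with $\pm 1$ coefficients'' property completely transparent. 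Either way, the point you flagged as delicate---that the fundamental-cycle coefficients really are $\pm 1$ because $\hat D$ is the identity on $T_0$---is exactly the hinge in both arguments.
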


The volume of a Minkowski sum of vectors $v_1,...,v_m$ is equal to a summation over subset of vectors $v_{i_1},...,v_{i_n}$ of the determinant of the matrix whose columns are $v_{i_1},...,v_{i_n}$.  The result follows from the fact that an edge can only appear once in a simple cycle and the multilinearity of the determinant.

For convenience, in the rest of this section, assume that we have fixed $\hat{D}$ to be such a spanning representation.  \footnote{As per Theorem \ref{thm:reliability}, choice of spanning representation does not affect liquidity results.  This makes the presentation of results simpler.}

To analyze liquidity, we also need a useful analytical description of the volume of set where a transaction $v$ succeeds.

\begin{lemma}
  Let $v$ be the vector in $\mathbb{R}^n$ corresponding to some transaction $\tau$
  If $Z$ is the subset of $\mathbb{R}^n$ corresponding to the entire set of cycle-equivalence classes, then $Z\cap \lbrace z:z-v\in Z\rbrace$ (Minkowski subtraction) is the space where $\tau$ succeeds.
\end{lemma}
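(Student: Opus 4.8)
\emph{Proof plan.} By ``the space where $\tau$ succeeds'' I mean the set of $z\in Z$ such that $\tau$ is feasible in some configuration $w$ with represented score $S(w)=z$; this is a well-defined subset of $Z$ because transaction-equivalence coincides with cycle-equivalence (Corollary~\ref{thm:txiscycle}), so whether $\tau$ is feasible depends only on the cycle-equivalence class, i.e.\ only on $S(w)$. The plan is to prove the two inclusions separately. Both rest on one linear-algebra fact, which I would establish first: the map $D$ extended to edge vectors, $f=(f_e)_e\mapsto\sum_e f_e\,\hat D(e)$ (the sum over edges taken in their positive orientation), has kernel exactly the space of circulations on $G$. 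Indeed, property~(1) of a spanning representation says that every simple cycle maps to $0$, so the circulation space, being spanned by simple cycles, lies in $\ker D$; property~(2) says the images of the edges of a spanning tree span $\mathbb{R}^n$, so $D$ is surjective and $\dim\ker D=|E|-n=|E|-|V|+1$, which is exactly the dimension of the circulation space of the connected graph $G$; hence the two spaces are equal.

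\emph{Forward inclusion.} Suppose $\tau$, sending $k$ units from $x$ to $y$, succeeds in a configuration $w$ with $S(w)=z\in Z$. Then some feasible flow $f$ routes $k$ units from $x$ to $y$ in $w$; executing $f$ leaves the network in a valid configuration $w'$, and by Theorem~\ref{thm:scorecycle} (together with the observation that every flow meeting a fixed demand has the same direction) $S(w')=z-D(f)=z-v$. Hence $z-v\in Z$, and $z\in Z$ by hypothesis, so $z$ lies in $Z\cap\{z:z-v\in Z\}$.

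\emph{Reverse inclusion.} Suppose $z\in Z$ and $z-v\in Z$. Choose configurations $w,w'$ with $S(w)=z$ and $S(w')=z-v$, and write them in coordinates $w_e,w'_e\in[0,c(e)]$ over the positively oriented edges. Put $f_e:=w_e-w'_e$. Then $D(f)=S(w)-S(w')=v=D(g)$, where $g$ is the path flow of value $k$ along a path $p$ realizing $v=k\hat D(p)$. Thus $f-g\in\ker D$ is a circulation, so $f=g+(\text{circulation})$ is itself a flow routing $k$ units from $x$ to $y$. It is feasible in $w$: from $0\le w'_e\le c(e)$ one gets $w_e-c(e)\le f_e\le w_e$, i.e.\ $f_e$ never exceeds $u_e$'s escrow $w(u_e,v_e)=w_e$ in the positive direction nor $v_e$'s escrow $w(v_e,u_e)=c(e)-w_e$ in the negative direction. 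Therefore $\tau$ succeeds in $w$ (in fact executing $f$ carries $w$ exactly to $w'$), so $z$ lies in the space where $\tau$ succeeds.

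\emph{Main obstacle.} The substantive step is the identification $\ker D=\{\text{circulations}\}$, after which both inclusions are short; the only other care needed is the orientation and sign bookkeeping when passing between ``flow on directed edges'' and ``edge vector over positively oriented edges''. A smaller but necessary point is that ``the space where $\tau$ succeeds'' is legitimately a subset of $Z$ rather than of the raw configuration space, which is precisely what Corollary~\ref{thm:txiscycle} supplies.
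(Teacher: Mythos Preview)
Your argument is correct and is essentially a fleshed-out version of the paper's proof, which consists of the single line ``Follows from the theorems of Section~\ref{sec:configurations}.'' In particular, your forward inclusion is exactly Theorem~\ref{thm:scorecycle}(2), and your reverse inclusion makes explicit the one nontrivial point the paper leaves to the reader: that $\ker D$ equals the circulation space, so the edge-difference $f=w-w'$ of two configurations with $S(w)-S(w')=v$ is automatically an $x\!\to\! y$ flow of value $k$ that is feasible in $w$.
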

\begin{proof}
  Follows from the theorems of Section \ref{sec:configurations}. 
\end{proof}

We start with the easy case for analysis.  Theorems \ref{thm:cyclemksum} and \ref{thm:zonovolume} shows that the image $Z$ of the set of all cycle-equivalence classes is a zonotope with a convenient structure.  When there is an edge already present in the graph between two agents $x$ and $y$, the area where a transaction from $x$ to $y$ succeeds retains the same kind of structure.

In the language of the above lemma, for a transaction $v$ from $x$ to $y$, the space corresponding to transactions where $v$ succeeds, $Z\cap \lbrace z:z-v\in Z\rbrace$, is a zonotope.  In fact, this zonotope is identical to the zonotope generated by a credit network where the capacity of the edge from $x$ to $y$ is reduced by the size of the transaction.

\begin{theorem}
  \label{thm:mkdiffgood}
  If $v$ corresponds to a transaction between $x$ and $y$ of size $k$, and $c(x,y)\geq k$, then the volume where $v$ succeeds is equal to $\Gamma(c^\prime(e_1)...,c^\prime(e_m))$ where $c^\prime=c$ except that $c^\prime(x,y)=c(x,y)-k$.  The volume of the space where $v$ fails is equal to $k\Gamma_{x=y}(c(e_1)...,c(e_m))$.
\end{theorem}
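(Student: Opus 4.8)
The plan is to combine the zonotope description of the configuration space (Theorems~\ref{thm:cyclemksum} and~\ref{thm:zonovolume}) with the Minkowski‑difference characterization of the success region from the preceding lemma, and then to evaluate both volumes by slicing along the direction $D(x,y)$. First I would fix convenient coordinates: since $c(x,y)\ge k>0$ the edge $(x,y)$ is present and hence lies in some spanning tree $T$ of $G$, and because the volume/liquidity conclusions do not depend on the chosen spanning representation (Theorem~\ref{thm:reliability}) I may assume $\hat D$ sends $T$ to the standard basis, so that $D(x,y)=e_i$ for some coordinate $i$. Taking the single edge $(x,y)$ as the path realizing $\tau$, and orienting it so that $s(x,y)=1$, the transaction vector is $v=kD(x,y)=ke_i$. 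Write $Z = Z' + [0,\,c(x,y)D(x,y)]$, where $Z'$ is the zonotope generated by the vectors $c(e)D(e)$ for $e\ne(x,y)$. By the preceding lemma, the region where $\tau$ succeeds is $Z\cap(Z+v)$ and the region where it fails is $Z\setminus(Z+v)$.

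Next I would slice by the coordinate projection $\pi\colon\mathbb{R}^n\to\mathbb{R}^{n-1}$ that kills coordinate $i$ (equivalently, projection along $D(x,y)$), noting that $\pi(Z)=\pi(Z')$. Over any point $\bar z$ of this common shadow, the fibre of $Z'$ is a segment along $D(x,y)$ of some length $\lambda_{\bar z}\ge 0$; hence the fibre of $Z$ has length $\lambda_{\bar z}+c(x,y)$, the fibre of $Z+v$ is that same interval shifted by $k$ along $D(x,y)$, and comparing the two intervals (using $k\le c(x,y)$) shows the fibre of $Z\cap(Z+v)$ has length $\lambda_{\bar z}+c(x,y)-k$ while the fibre of $Z\setminus(Z+v)$ has length exactly $k$. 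Since $Z$, $Z+v$, and the zonotope $Z'':=Z'+[0,(c(x,y)-k)D(x,y)]$ all have the same shadow $\pi(Z')$ and the same fibre base‑points, this fibrewise comparison upgrades to the set equality $Z\cap(Z+v)=v+Z''$. Now $Z''$ is precisely the image under $\hat D$ of the cycle‑equivalence classes of the network obtained from $G$ by replacing $c(x,y)$ with $c'(x,y)=c(x,y)-k$ (apply Theorem~\ref{thm:cyclemksum} with the same $\hat D$), so by translation invariance of volume and Theorem~\ref{thm:zonovolume} applied to that modified network, $\mathrm{vol}(Z\cap(Z+v)) = \mathrm{vol}(Z'') = \Gamma(c'(e_1),\dots,c'(e_m))$, which is the first claim.

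For the failure region, the constant fibre length $k$ gives $\mathrm{vol}(Z\setminus(Z+v)) = k\cdot\mathrm{vol}_{n-1}(\pi(Z'))$, so it remains to identify $\mathrm{vol}_{n-1}(\pi(Z'))$ with $\Gamma_{x=y}(c)$. Here I would argue that $\pi\circ\hat D$, restricted to $E\setminus(x,y)$, is a spanning representation of the contraction $G/(x,y)$ sending the spanning tree $T\setminus(x,y)$ to the standard basis of $\mathbb{R}^{n-1}$: the vectors $\{\hat D(e)\}_{e\in E}$ form a linear realization of the graphic matroid of $G$ (immediate from the spanning‑representation axioms — simple cycles sum to zero and spanning trees map to linear bases), so their images under the quotient map $\pi$ realize the matroid contraction, which is the graphic matroid of $G/(x,y)$, and the two spanning‑representation axioms transfer. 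Theorem~\ref{thm:zonovolume} applied to $G/(x,y)$ with this representation then yields $\mathrm{vol}_{n-1}(\pi(Z')) = \Gamma_{G/(x,y)}(c) = \Gamma_{x=y}(c)$, completing the proof. (As a cross‑check, $\Gamma(c)-\Gamma(c') = k\,\Gamma_{x=y}(c)$ also follows from the deletion–contraction identity $\Gamma_G = y_{(x,y)}\Gamma_{G/(x,y)}+\Gamma_{G\setminus(x,y)}$ together with $\mathrm{vol}(Z)=\Gamma(c)$, so either of the two volume formulas could instead be deduced from the other.)

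I expect the interval bookkeeping in the slicing step and the axiom‑checking for $\pi\circ\hat D$ to be routine. The one genuine obstacle is the contraction step — making precise that projecting $\mathbb{R}^n$ along $D(x,y)$ converts the graphic‑matroid representation of $G$ into one of $G/(x,y)$, with the shadow $\pi(Z')$ being exactly the contracted network's zonotope — since that is where the combinatorial structure of the graphic matroid (an edge appears at most once in a simple cycle, bases restrict to bases under contraction) actually enters.
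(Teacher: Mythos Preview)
Your argument is correct, but it is organized differently from the paper's. The paper establishes the set identity $Z\cap\{z:z-v\in Z\}=Y$ (the zonotope of the reduced-capacity network) by working at the level of configurations: the nontrivial inclusion is shown by taking any configuration $w$ of $G$ from which $\tau$ succeeds and using cycle-equivalence to reroute $k-(c(x,y)-w(x,y))$ units around the cycle through $(x,y)$, producing a cycle-equivalent configuration with $w(x,y)\le c(x,y)-k$ that is visibly a configuration of the reduced network. Theorem~\ref{thm:zonovolume} then gives the success volume, and the failure volume is left implicit via the linearity of the tree polynomial in $c(x,y)$ (your deletion--contraction cross-check).

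Your route is instead purely geometric: you slice $Z=Z'+[0,c(x,y)D(x,y)]$ along $D(x,y)$, compare fibre lengths to obtain $Z\cap(Z+v)=v+Z''$, and then identify the shadow $\pi(Z')$ with the contracted network's zonotope to get the failure volume directly. This avoids invoking cycle-equivalence of configurations and proves both volume claims explicitly; in exchange you need the contraction lemma (that $\pi\circ\hat D$ is a spanning representation of $G/(x,y)$), which the paper never states as such. Your slicing decomposition is essentially the same one the paper later uses in the proof of Lemma~\ref{thm:smalltx} and in the sampling algorithm of Theorem~\ref{thm:uniform_sample_algorithm}, so your approach dovetails nicely with those arguments, whereas the paper's configuration-level proof keeps Theorem~\ref{thm:mkdiffgood} closer to the credit-network semantics.
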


When this holds, then computing a transaction's chance of success is straightforward.

\begin{corollary}
  \label{thm:volume_resistance}
  Suppose that $v$ corresponds to a transaction between $x$ and $y$ of size $k$, and that the graph contains an edge $(x,y)$ with $c(x,y)\geq k$.

  \begin{enumerate}
  \item[1] The probability that $v$ fails is simply $k \Gamma_{(x,y)}(c(e_1),...,c(e_m))/\Gamma(c(e_1),...,c(e_m))$.
  \item[2] Let $G=(V,E)$ be the underlying graph, and let $G^\prime=(V, E^\prime)$ be a multigraph where $E^\prime$ is $E$ with $(x,y)$ duplicated to make edges $f_1$ and $f_2$.  Set $c^\prime(f_1)=k$, $c^\prime(f_2)=c(x,y)-k$, and for all other edges $e$, $c^\prime(e)=c(e)$.  

  Then the probability that $v$ fails is equal to the probability that $f_1$ is in a random weighted tree sample from $G^\prime$.
  \end{enumerate}
\end{corollary}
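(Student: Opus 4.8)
The plan is to obtain both parts as bookkeeping consequences of Theorems~\ref{thm:zonovolume} and \ref{thm:mkdiffgood} together with the definition of the generating polynomial. For part 1, recall that since the stationary measure $\pi$ is uniform, liquidity is a ratio of volumes in $\mathbb{R}^n$. Theorem~\ref{thm:zonovolume} identifies the total volume of the image of the set of cycle-equivalence classes with $\Gamma(c(e_1),\dots,c(e_m))$, and Theorem~\ref{thm:mkdiffgood} identifies the volume of the region where $v$ fails with $k\,\Gamma_{x=y}(c(e_1),\dots,c(e_m))$. Because $(x,y)$ is an edge of $G$, the defining remark $\Gamma_{a=b}=\Gamma_e$ for $e=(a,b)$ gives $\Gamma_{x=y}=\Gamma_{(x,y)}$, so the failure probability is exactly $k\,\Gamma_{(x,y)}(c(e_1),\dots,c(e_m))/\Gamma(c(e_1),\dots,c(e_m))$. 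As a consistency check one can note that the success volume $\Gamma(c')$ from Theorem~\ref{thm:mkdiffgood} and the failure volume $k\,\Gamma_{(x,y)}(c)$ sum to $\Gamma(c)$, using that $\Gamma$ is linear in the variable $y_{(x,y)}$ and hence $\Gamma_{(x,y)}$ is independent of $c(x,y)$.

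For part 2, let $\Gamma'$ denote the generating polynomial of the multigraph $G'$, with variables $y_{f_1},y_{f_2}$ for the two parallel copies of $(x,y)$ and $y_e$ for every other edge. Write $\Gamma(y)=y_{(x,y)}\,\Gamma_{(x,y)}(y)+\Gamma^{0}(y)$, where $\Gamma^{0}$ collects the terms of spanning trees of $G$ avoiding $(x,y)$. I would then partition the spanning trees of $G'$ into three families: those using neither $f_1$ nor $f_2$, those using $f_1$ only, and those using $f_2$ only (the case ``both'' is empty, since $f_1,f_2$ are parallel and would form a cycle). The first family is in weight-preserving bijection with the spanning trees of $G$ avoiding $(x,y)$, and each of the other two families is in bijection with the spanning trees of $G$ using $(x,y)$ (remove the used copy to recover a spanning tree of $G$ with $(x,y)$ contracted). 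Hence
\[
\Gamma'(c')=\Gamma^{0}(c)+c'(f_1)\,\Gamma_{(x,y)}(c)+c'(f_2)\,\Gamma_{(x,y)}(c)=\Gamma^{0}(c)+c(x,y)\,\Gamma_{(x,y)}(c)=\Gamma(c),
\]
using $c'(f_1)+c'(f_2)=c(x,y)$. A weighted-uniform tree sample from $G'$ selects a tree $T$ with probability $\bigl(\prod_{e\in T}c'(e)\bigr)/\Gamma'(c')$, so
\[
\Pr[f_1\in T]=\frac{\sum_{T\ni f_1}\prod_{e\in T}c'(e)}{\Gamma'(c')}=\frac{c'(f_1)\,\Gamma_{(x,y)}(c)}{\Gamma(c)}=\frac{k\,\Gamma_{(x,y)}(c)}{\Gamma(c)},
\]
which is precisely the failure probability computed in part 1.

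I expect the only step requiring genuine care to be the spanning-tree bookkeeping for the multigraph $G'$: namely, verifying that $f_1$ and $f_2$ cannot co-occur in a spanning tree, and that deleting the used copy of $(x,y)$ from a spanning tree of $G'$ yields exactly a spanning tree of the $x{=}y$-contraction of $G$, so that the two ``$f_i$ only'' families each contribute $c'(f_i)\,\Gamma_{(x,y)}(c)$. Everything after that is substitution into the volume formulas already established, so I would keep the write-up short, citing Theorems~\ref{thm:zonovolume} and \ref{thm:mkdiffgood} for part 1 and the above bijection for part 2.
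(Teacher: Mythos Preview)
Your proposal is correct and matches the paper's intent: the paper does not give an explicit proof of this corollary, treating both parts as immediate from Theorems~\ref{thm:zonovolume} and~\ref{thm:mkdiffgood}, and your write-up fills in exactly the natural bookkeeping (the volume ratio for part~1 and the spanning-tree partition of $G'$ for part~2). The only remark is that your consistency check and the careful bijection argument for $G'$ are more detail than the paper provides, but they are correct and add clarity rather than diverging in approach.
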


As an aside, the criterion that $x$ and $y$ have an edge of capacity at least $k$ does not mean that $x$ and $y$ must route all of their transactions explicitly along that edge.  The model does not require any particular type of routing algorithm.  The rest of the graph is still relevant for $x$ and $y$ when they share an edge, as many configurations will require $x$ and $y$ to use edges other than the one directly between them.  

Now for the harder case for analysis.  Namely, when $n>2$ and $v$ does not correspond to an edge in the underlying graph, the space where $v$ succeeds is not necessarily a Minkowski sum of a set of vectors and thus lacks a concise mathematical characterization.  Such examples unfortunately occur in very simple graphs, such as the cycle on four vertices when nonadjacent vertices transact.  Figure 6 of \cite{althoff2015computing} gives a visual depiction of such an object.

Nevertheless, lower bounds on liquidity are derivable using the same analytical frameworks as Corollary \ref{thm:volume_resistance}.

\begin{lemma}
  \label{thm:smalltx}
  Let $x$ and $y$ be any two vertices, and let $v$ be the direction of a path from $x$ to $y$. 

  Let $f_v(k)$ denote the probability that a transaction size $k$ from $x$ to $y$ fails.  Then $0\leq f_v^\prime(k)\leq \Gamma_{x=y}(c)/\Gamma(c)$.  
  Moreover, $f_v^\prime(0)=\Gamma_{x=y}(c)/\Gamma(c)$.  
  
\end{lemma}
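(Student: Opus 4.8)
The plan is to carry everything into the zonotope picture of Theorems~\ref{thm:cyclemksum} and~\ref{thm:zonovolume}. Fix a spanning representation $\hat D$ sending some spanning tree $T$ to the standard basis; by Theorem~\ref{thm:reliability} we may take $T$ to contain a path $p$ from $x$ to $y$, so that $v=D(p)=\sum_{j\in I}\pm e_j$ for an index set $I$ with some $i_0\in I$. Let $Z=\sum_e[0,c(e)D(e)]$, so $\mathrm{vol}(Z)=\Gamma(c)$. By the lemma just above, a transaction of size $k$ fails precisely on the set $Z\setminus(Z+kv)$, hence $f_v(k)=1-\mathrm{vol}(Z\cap(Z+kv))/\Gamma(c)$, and it suffices to analyze $g(k):=\mathrm{vol}(Z)-\mathrm{vol}(Z\cap(Z+kv))$, noting $f_v(k)=g(k)/\Gamma(c)$.

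First I would slice $\mathbb{R}^n$ along the line $\mathbb{R}v$. Write $\mathbb{R}^n=\mathbb{R}v\oplus L$ with $L=\mathrm{span}(e_j:j\ne i_0)$; since $v$ has coordinate $\pm1$ in position $i_0$, the coordinates $(t,h)\mapsto tv+h$ have Jacobian $\pm1$, so Lebesgue measure is $dt\,dh$. For each $h$ in the shadow $P:=\pi_L(Z)$, the fibre $\{t:tv+h\in Z\}$ is an interval $[a(h),b(h)]$ of length $\ell(h)$ by convexity, and a state $tv+h\in Z$ fails the transaction iff $(t-k)v+h\notin Z$, which for $k>0$ happens exactly when $t<a(h)+k$. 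Hence the failing slice over $h$ has length $\min(k,\ell(h))$, and $g(k)=\int_P\min(k,\ell(h))\,dh$.

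From this formula $g$, and therefore $f_v$, is non-decreasing and concave in $k$ (each integrand is), so at every $k$ we get $0\le f_v'(k)=g'(k)/\Gamma(c)$ with $g'(k)=\int_P\mathds{1}\{\ell(h)>k\}\,dh\le\int_P\mathds{1}\{\ell(h)>0\}\,dh=\mathrm{vol}_{n-1}(P)$ (interpreting $f_v'$ as a one-sided derivative at the countably many kinks), and the right derivative at $0$ equals $\mathrm{vol}_{n-1}(P)$ exactly, because $\mathds{1}\{\ell>k\}\uparrow\mathds{1}\{\ell>0\}$ as $k\downarrow0$ and the degenerate-fibre set $\{\ell(h)=0\}$ lies in $\partial P$ (a concave nonnegative $\ell$ with an interior zero would vanish on a neighbourhood, contradicting full-dimensionality of $Z$) and is thus null. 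It then remains to identify $\mathrm{vol}_{n-1}(P)$ with $\Gamma_{x=y}(c)$: the shadow $P=\sum_e[0,\pi_L(c(e)D(e))]$ is exactly the zonotope attached, via the projected representation $\bar D:=\pi_L\circ\hat D$, to the graphic matroid of $G$ with $x$ and $y$ identified — any edge between $x$ and $y$ becomes a loop, whose vector is a multiple of $v$ and so projects to $0$ and drops out. Since $\bar D$ carries the spanning tree $T\setminus\{e_{i_0}\}$ of the contracted graph (where $e_{i_0}$ denotes the path edge that $\hat D$ maps to the basis vector $e_{i_0}$) to the standard basis of $L$, the argument of Theorem~\ref{thm:zonovolume} — multilinearity of the determinant together with unimodularity of graphic representations — gives $\mathrm{vol}_{n-1}(P)=\Gamma_{x=y}(c)$. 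Dividing by $\Gamma(c)$ delivers both $f_v'(k)\le\Gamma_{x=y}(c)/\Gamma(c)$ and $f_v'(0)=\Gamma_{x=y}(c)/\Gamma(c)$.

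I expect the main obstacle to be precisely this last identification of the shadow's $(n{-}1)$-volume with $\Gamma_{x=y}(c)$: one must check that the projected map $\bar D$ is still a unimodular representation of the contracted graphic matroid, so that no spurious determinantal factors appear in the volume-as-spanning-tree-sum identity, and separately dispatch the small but genuine point that the set of degenerate fibres has measure zero. The fibre integration and the concavity-based derivative bounds are then routine. (As a consistency check, $\Gamma_{x=y}(c)/\Gamma(c)$ is the effective resistance between $x$ and $y$ with conductances $c(e)$, in agreement with Corollary~\ref{thm:volume_resistance} and the effective-resistance remarks preceding it.)
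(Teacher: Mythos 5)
Your proof is correct and follows essentially the same route as the paper's: fiber the zonotope $Z$ along the transaction direction, write the failure volume as $\int_P \min(k,\ell(h))\,dh$, differentiate in $k$, and identify the shadow's $(n-1)$-volume with $\Gamma_{x=y}(c)$ via the projected Minkowski sum. Your oblique unit-Jacobian slicing and the explicit check that the projected generators form a unimodular spanning representation of the contracted graph are somewhat more careful than the paper's orthogonal-projection write-up (where two factors of $\vert v\vert$ silently cancel), but the substance of the argument is the same.
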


\begin{corollary}

  A lower bound of transaction success probability (i.e. $1-k*f_v^\prime(0)$) is computable efficiently (via calculating effective resistance, similarly to Theorem \ref{thm:volume_resistance}).

\end{corollary}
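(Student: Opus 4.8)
The plan is to obtain this by a one-line integration from Lemma \ref{thm:smalltx}, and then to recognize $f_v'(0) = \Gamma_{x=y}(c)/\Gamma(c)$ as a classical effective resistance --- the same quantity Corollary \ref{thm:volume_resistance} already evaluates --- whose computation is standard.

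First I would verify that $1 - k f_v'(0)$ really is a lower bound on the success probability. Note $f_v(0) = 0$: a transaction of size $0$ is satisfied by the zero flow in every configuration, so it never fails. Lemma \ref{thm:smalltx} gives $0 \le f_v'(t) \le \Gamma_{x=y}(c)/\Gamma(c)$ for all $t \ge 0$, and also $f_v'(0) = \Gamma_{x=y}(c)/\Gamma(c)$, so $f_v'(t) \le f_v'(0)$ throughout. Integrating, $f_v(k) = f_v(k) - f_v(0) = \int_0^k f_v'(t)\,dt \le k\, f_v'(0)$, hence the success probability $1 - f_v(k) \ge 1 - k f_v'(0)$. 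The only delicate point is the regularity needed to run the fundamental theorem of calculus; rather than lean on pointwise differentiability I would note that $f_v$ is monotone non-decreasing (larger transactions are no easier) and Lipschitz with constant $f_v'(0)$, since the region where a size-$k$ transaction succeeds is $Z \cap (v + Z)$ with $v = k\hat{D}(p)$, a fixed zonotope intersected with a translate that slides linearly in $k$, so its volume varies Lipschitz-continuously in $k$ with slope bounded exactly by the estimate of Lemma \ref{thm:smalltx}. Then $f_v(k) \le k f_v'(0)$ follows immediately.

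For efficient computability I would argue that $f_v'(0) = \Gamma_{x=y}(c)/\Gamma(c)$ is a ratio of two weighted spanning-tree polynomials evaluated at the capacities: $\Gamma(c)$ is the weighted number of spanning trees of $G$, and $\Gamma_{x=y}(c)$ is the weighted number of spanning trees of the graph obtained by identifying $x$ with $y$ (equivalently, the weighted number of spanning $2$-forests of $G$ separating $x$ from $y$). By the weighted Matrix-Tree theorem each is a cofactor of the appropriate weighted Laplacian, hence a single determinant; their ratio is precisely the effective resistance $R_{xy}$ between $x$ and $y$ when capacities are read as conductances, and can equivalently be obtained by solving one Laplacian linear system. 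So $1 - k f_v'(0) = 1 - k R_{xy}$ is computable in polynomial time, in the same fashion as Corollary \ref{thm:volume_resistance}.

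I do not anticipate a genuine obstacle here: the corollary is essentially a repackaging of Lemma \ref{thm:smalltx} together with the standard spanning-forest identity for effective resistance. The one place warranting care is making the integration rigorous without assuming more smoothness of $f_v$ than has been established, and the Lipschitz/monotonicity observation above handles that; it also makes transparent why the bound is sharp as $k \to 0$, matching $f_v'(0) = \Gamma_{x=y}(c)/\Gamma(c)$.
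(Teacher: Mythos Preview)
Your proposal is correct and takes essentially the same approach as the paper: the paper proves the lemma and corollary jointly, deriving the lower bound in its final paragraph by observing that the failure volume is at most $k\Gamma_{x=y}(c)$ (equivalently, integrating the derivative bound), which is exactly your integration step. Your added remarks on Lipschitz regularity and on reading $\Gamma_{x=y}(c)/\Gamma(c)$ as an effective resistance via the weighted Matrix--Tree theorem are elaborations the paper leaves implicit, but the underlying argument is the same.
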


\begin{proof}
  Let $Z$ be the image of the set of all cycle-equivalence classes, let $P$ be the projection of $Z$ onto some fixed hyperplane with normal $v$, and let $\rho$ denote the volume of the projection in $\mathbb{R}^{n-1}$.  For every $p\in P$, consider the line extending in direction $v$ from $p$.  Suppose a point $q$ is on this line.  Let $d_p(q)$ be the signed distance of $q$ from $p$, namely, $\vert p-q\vert\cdot sgn(v\cdot (q-p))$.

  Then for the set of points $Z_p$ on this line and also in $Z$, let $g_1(p)=\min_{z\in Z_p}(d_p(z))$ and $g_2(p)=\max_{z\in Z_p}(d_p(z))$.

   Note that because $Z$ is convex, it can be represented as the space $\cup_{p\in P}\lbrace x : x\in p+[g_1(p), g_2(p)]\rbrace$.

   Then the volume of $Z$ where a transaction in the direction of $v$ of size $k$ fails is:
   \begin{equation*}
  \int_0^k\int_P\mathds{1}(g_2(p)-g_1(p)\geq s)dp~ds.
  \end{equation*}
 The derivative of this quantity at $k$ is therefore 
  \begin{equation*}
  \int_P\mathds{1}(g_2(p)-g_1(p)\geq k)dp
  \end{equation*}

  Clearly this quantity is bounded between $\rho$ and $0$, achieves $\rho$ when $s=0$, and is monotonically decreasing in $k$.

  Note that a projection of a Minkowski sum of a set of vectors is the Minkowski sum of the vectors individually projected.  It follows that $\rho =\Gamma_{x=y}(c)$ and that $f^\prime_v(0)$ is $\Gamma_{x=y}(c)/\Gamma(c)$. 

  $1-kf_v^\prime(0)$ lower bounds transaction success probability because the volume where a transaction of size $k$ in direction $v$ fails is upper bounded by $k\Gamma_{x=y}(c)$, and thus transaction success probability is at least $\left(\Gamma(c)-k\Gamma_{x=y}(c)\right)/\Gamma(c)=1-kf_v^\prime(0)$.

\end{proof}

In fact, the Lemma \ref{thm:smalltx} directly generalizes \ref{thm:volume_resistance}.  

\begin{corollary}
\label{cor:tight_cond}
The lower bound of Lemma \ref{thm:smalltx} is tight if and only if there is an edge of capacity at least $k$ in the direction of the transaction in the credit network graph.

\end{corollary}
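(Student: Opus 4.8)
The plan is to translate tightness of the bound into a statement about the fibres of the zonotope $Z$ in the transaction direction $v$, and then read that statement off from the combinatorics of the spanning representation.

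\emph{Step 1 (reduce to a width condition).} Keep the notation of the proof of Lemma~\ref{thm:smalltx}: $P$ is the projection of $Z$ onto a hyperplane with normal $v$, and for $p\in P$ let $\ell(p)=g_2(p)-g_1(p)$ be the $v$-extent of the fibre $Z\cap(p+\mathbb{R}v)$, scaled so that a size-$k$ transaction is a displacement of $k$. That proof shows the volume of configurations in which a size-$k$ transaction along $v$ fails equals $\int_0^k\int_P\mathds{1}(\ell(p)\ge s)\,dp\,ds=\int_P\min(k,\ell(p))\,dp$, and that $\mathrm{vol}(P)=\Gamma_{x=y}(c)$; recall also $\mathrm{vol}(Z)=\Gamma(c)$ (Theorem~\ref{thm:zonovolume}). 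Hence the stated lower bound $1-kf_v^\prime(0)=1-k\,\Gamma_{x=y}(c)/\Gamma(c)$ equals the true success probability $1-f_v(k)$ exactly when $\int_P\min(k,\ell(p))\,dp=k\,\Gamma_{x=y}(c)=k\,\mathrm{vol}(P)$; since $\min(k,\ell(p))\le k$ pointwise, this holds iff $\ell(p)\ge k$ for almost every $p\in P$. As $\ell$ is the difference of the lower and upper envelopes of the polytope $Z$ in direction $v$, it is piecewise-linear and continuous on the closed polytope $P$, so the condition is equivalent to $\min_{p\in P}\ell(p)\ge k$.

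\emph{Step 2 (evaluate $\min_P\ell$).} Since $\ell$ is concave, it attains its minimum over $P$ at a vertex $p_0$ (if $P$ is a single point, take $p_0=P$). Pick $u\perp v$ for which $p_0$ is the unique maximiser of $\langle u,\cdot\rangle$ over $P$. Because $u\perp v$, the quantity $\langle u,z\rangle$ depends only on the projection of $z$, so the face $F_u=\mathrm{argmax}_{z\in Z}\langle u,z\rangle$ of the zonotope is exactly the fibre of $Z$ over $p_0$, and therefore lies on the line $p_0+\mathbb{R}v$. But $F_u$ is a translate of $\sum_{e:\langle u,\,c(e)D(e)\rangle=0}[0,c(e)D(e)]$, and a Minkowski sum of segments lies on a line parallel to $v$ only if every generator occurring in it is parallel to $v$. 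By Theorem~\ref{thm:scorecycle} the direction of an $x$--$y$ path equals $D(x,y)$ whenever that edge is present, and by the matroid structure of the representation the only edge whose direction is a scalar multiple of $v=D(x,y)$ is $(x,y)$ itself (a two-element dependent set in a graphic matroid is a parallel pair of edges; for multigraphs, read $c(x,y)$ as the total capacity of the parallel class). Since moreover $\langle u,\,c(x,y)D(x,y)\rangle=c(x,y)\langle u,v\rangle=0$, that generator does appear in $F_u$ when $(x,y)\in E$. Hence $\ell(p_0)=c(x,y)$ if $(x,y)\in E$ and $\ell(p_0)=0$ otherwise, and the same value occurs at every vertex of $P$, so $\min_P\ell$ is $c(x,y)$ or $0$ accordingly.

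Combining the two steps, the bound of Lemma~\ref{thm:smalltx} is tight iff $\min_P\ell\ge k$, i.e.\ iff the graph contains an edge $(x,y)$ with $c(x,y)\ge k$ --- an edge of capacity at least $k$ in the direction of the transaction. The step I expect to demand the most care is the matroid bookkeeping of Step 2: checking that no generator other than the one from $(x,y)$ is parallel to $v$ (so nothing else can lengthen the fibre $F_u$ over a vertex of $P$), together with confirming that $\ell$ really is continuous up to $\partial P$ so that the ``almost everywhere'' conclusion of Step 1 sharpens to a genuine pointwise minimum. Degenerate configurations --- $(x,y)$ a bridge, $n=1$, or zero-capacity edges --- should all be subsumed by the vertex computation, but warrant a separate check.
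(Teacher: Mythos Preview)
Your argument is correct, and it takes a genuinely different route from the paper's own proof.

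The paper handles the ``if'' direction by citing Theorem~\ref{thm:mkdiffgood}/Corollary~\ref{thm:volume_resistance}, just as you do implicitly.  For the ``only if'' direction, however, the paper does not compute $\min_P\ell$ at all; instead it \emph{constructs} an explicit escrow configuration.  It picks a vertex cut $H$ separating $x$ from $y$, sets the weights on edges crossing $H$ small enough that strictly less than $k/2$ units can cross in either direction, and observes that the image $z$ of this configuration satisfies $z\pm (k/2)v\notin Z$.  Closedness of $Z$ then produces a neighbourhood of short fibres, which is enough to break the equality in your Equation~(\ref{eqn:tight_if_contained})-style condition.

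Your approach is more structural: you exploit that $\ell=g_2-g_1$ is concave (hence minimised at a vertex of the projected zonotope $P$), then identify the fibre over such a vertex with a face $F_u$ of $Z$ for $u\perp v$ in the normal cone of $p_0$.  Since $F_u$ lies on a line parallel to $v$, the only generators that can appear are those with $D(e)\parallel v$, and the graphic-matroid representation forces these to be exactly the $x$--$y$ edges.  This yields the sharper statement $\min_P\ell=c(x,y)$ (or $0$), not merely $\min_P\ell<k$.  The trade-off is that you invoke more geometry (zonotope face decomposition, concavity of the width function), whereas the paper's vertex-cut construction is more hands-on and closer to the credit-network semantics.  Both are valid; yours arguably explains \emph{why} the threshold is $c(x,y)$, while the paper simply certifies that it is at most $c(x,y)$.

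One small remark on the point you flagged: your worry that generators other than $(x,y)$ might be parallel to $v$ is indeed resolved by the graphic-matroid argument you sketch, and the genericity of $u$ is not actually needed --- any $u$ in the interior of the normal cone of $p_0$ in $P$ automatically excludes generators with nonzero projection, since otherwise $F_u$ would project to more than a point.
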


\begin{proof}
The lower bound is clearly tight if the credit network graph contains an edge in the direction of the transaction, by Theorem \ref{thm:volume_resistance}.

Now suppose that the edge does not exist.  Specifically, let $Z$ again be the image of the set of all cycle equivalence classes, and let $v$ be the direction of a transaction between vertices $x$ and $y$.  Suppose that the transaction has size $k$, and that edge from $x$ to $y$, if it exists, has capacity $c(x,y)<k$.  

The proof of Lemma \ref{thm:smalltx} shows that volume of the space in $Z$ where a transaction in direction $v$ of size $k$ fails is equal to 
 \begin{equation*}
  \int_0^k\int_P\mathds{1}(g_2(p)-g_1(p)\geq s)dp~ds.
  \end{equation*}
  with $P$, $g_1$, and $g_2$ defined as in the proof of Lemma \ref{thm:smalltx}.

  The lower bound of Lemma \ref{thm:smalltx} is tight if and only if 
  \begin{dmath}
  \label{eqn:tight_if_contained}
  \int_0^k\int_P\mathds{1}\left(g_2(p)-g_1(p)\geq s\right)dp~ds = k\int_P\mathds{1}{\left(g_2(p)-g_1(p)\geq 0\right)dp}=k\Gamma_{x=y}(c)
  \end{dmath}

It suffices to find a point $z\in Z$ such that $z+(k/2)v\notin Z$ and $z-(k/2)v\notin Z$.  Given such a point $z$, the point $z+(k/2)v$ is a point from which $z+(k/2)v-kv=z-(k/2)v\notin Z$.  Because $Z$ is closed, there is a point $p_0\in P$ such that $g_2(p_0)-g_1(p_0)<k-\delta$.  Hence there exists $\varepsilon$ such that for $p\in P$ with $\vert\vert p-p_0\vert\vert_2\leq \varepsilon$, then $g_2(p)-g_1(p)<k-\delta/2$, and thus the first equality of Equation \ref{eqn:tight_if_contained} cannot hold.

Consider a credit network configuration defined as follows.  Pick some vertex cut $H\subset V$ of the graph separating $x$ from $y$.  That is, a set of vertices such that all paths from $x$ to $y$ flow through $H$ (other than possibly a direct path from $x$ to $y$).  Let $m$ be the number of edges $(h, i)$ such that $h\in H$ and $i\notin H$, and for every such edge $(h, i)$, set $w(i, h)$ such that $w(i,h)\leq (k-c(x,y))/(2m+1)$.  Let $w(x,y)=w(y,x)=c(x,y)/2$.  Pick $w$ for every other edge arbitrarily.

Then from configuration $w$, strictly less than $c(x,y)/2+m(k-c(x,y))/(2m)=k/2$ money can move from $x$ to $y$ or from $y$ to $x$.  This means that the point $z\in Z$ that is the image of $w$ under our spanning representation has the property that $z+kv/2\notin Z$ and $z-kv/2\notin Z$ but $z\in Z$.

\end{proof}

\subsection{Monotonicity}

Although we conjecture monotonicity in the general case, the same technical difficulty (Minkowski subtraction does not always produce zonotopes) that obstructed liquidity analysis presents itself when studying monotonicity.  
Nevertheless, we show that monotonicity holds when transactions are small or between vertices that already share a direct connection in the credit network graph. 

\begin{theorem}

  Suppose that $\tau$ is a transaction from $x$ to $y$ of size $k$ and that $c(x,y)\geq k$.
  Let $a$ and $b$ be any two vertices (not necessarily connected directly, possibly equal to $x$ or $y$).

  Then increasing $c(a,b)$ does not decrease the success probability of $\tau$.

\end{theorem}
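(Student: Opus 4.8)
The plan is to reduce everything to Theorem~\ref{thm:mkdiffgood}, which gives an \emph{exact} formula for the success volume whenever $c(x,y)\geq k$: the volume where $\tau$ succeeds is $\Gamma(c'(e_1),\dots,c'(e_m))$ with $c'=c$ except $c'(x,y)=c(x,y)-k$, and the total volume (by Theorem~\ref{thm:zonovolume}) is $\Gamma(c(e_1),\dots,c(e_m))$. So the success probability of $\tau$ is
\begin{equation*}
\Pr[\tau \text{ succeeds}] = \frac{\Gamma(c'(e_1),\dots,c'(e_m))}{\Gamma(c(e_1),\dots,c(e_m))}.
\end{equation*}
Increasing $c(a,b)$ from $t_0$ to $t_1 > t_0$ changes only the variable $y_{(a,b)}$ in both $\Gamma$ and $\Gamma'$ (note $c$ and $c'$ agree on $(a,b)$, since $(a,b)$ is either a different edge from $(x,y)$, or it equals $(x,y)$ and then both numerator and denominator shift its value equally). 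So the statement reduces to: the ratio $g(t) := \Gamma'(\dots,t,\dots)/\Gamma(\dots,t,\dots)$, viewed as a function of the single capacity $t=c(a,b)$, is non-decreasing in $t$. Actually we should be slightly careful about the case $(a,b)=(x,y)$: there $c'(x,y)=t-k$, so $g(t)=\Gamma(\dots,t-k,\dots)/\Gamma(\dots,t,\dots)$; I will handle this case separately at the end, but it turns out to behave the same way.

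The key structural fact is that $\Gamma$ is \emph{multilinear} in each $y_e$ (an edge appears at most once in any spanning tree), so $\Gamma(\dots,t,\dots) = t\,\Gamma_{(a,b)}(c) + \Gamma_{\setminus(a,b)}(c)$, where $\Gamma_{(a,b)}$ is the generating polynomial summed over $S$ with $S\cup\{(a,b)\}$ a spanning tree (trees using $(a,b)$, with that factor removed) and $\Gamma_{\setminus(a,b)}$ is the generating polynomial of spanning trees avoiding $(a,b)$ (i.e.\ of $G$ with edge $(a,b)$ deleted); neither depends on $t$. The same decomposition holds for $\Gamma'$ with the same coefficients when $(a,b)\neq(x,y)$, because deleting or contracting $(a,b)$ commutes with the capacity substitution $c\mapsto c'$ on the \emph{other} edges. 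So with $A:=\Gamma'_{(a,b)}(c')$, $B:=\Gamma'_{\setminus(a,b)}(c')$, $C:=\Gamma_{(a,b)}(c)$, $D:=\Gamma_{\setminus(a,b)}(c)$ — all non-negative constants independent of $t$ — we have $g(t) = (At+B)/(Ct+D)$. The derivative of this Möbius function has the sign of $AD - BC$, so monotonicity (in fact, monotone non-decreasing) reduces to the single inequality
\begin{equation*}
\Gamma'_{(a,b)}(c')\cdot\Gamma_{\setminus(a,b)}(c) \;\geq\; \Gamma'_{\setminus(a,b)}(c')\cdot\Gamma_{(a,b)}(c).
\end{equation*}
Since $c'$ differs from $c$ only by lowering the $(x,y)$-capacity by $k$, and by Theorem~\ref{thm:mkdiffgood} all four of these objects are themselves ``success volumes'' or ``generating polynomials of minors,'' I expect this to follow from a correlation-type argument: in the minor of $G$ where $(a,b)$ is contracted versus deleted, the edge $(x,y)$ (or rather the subdivided-edge multigraph of Corollary~\ref{thm:volume_resistance} with edges $f_1$ of capacity $k$ and $f_2$ of capacity $c(x,y)-k$) is \emph{no more likely} to be in a random spanning tree after we add the extra edge $(a,b)$ — exactly the negative-correlation / series-parallel-reduction behavior that IS known for this special configuration (a parallel pair). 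Concretely, one rewrites the inequality, using $\Gamma'(c') = \Gamma(c) - k\,\Gamma_{x=y}(c)$-type identities from Theorem~\ref{thm:mkdiffgood} and the matrix-tree / effective-resistance interpretation, as a statement that effective resistance between $x$ and $y$ (or the relevant ratio) does not increase when capacity is added on $(a,b)$ — which is just Rayleigh monotonicity for electrical networks.

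The main obstacle, and where the real work lies, is establishing $AD \geq BC$ cleanly. Rayleigh monotonicity handles the ``effective resistance decreases'' intuition, but here $A,C$ are derivatives of $\Gamma',\Gamma$ with respect to a \emph{different} edge $(a,b)$ than the one being perturbed, so one needs the two-variable version: for the fixed multigraph $G'$ of Corollary~\ref{thm:volume_resistance} (with the parallel pair $f_1,f_2$ standing in for $(x,y)$), show $\Pr[f_1 \in \mathcal T \mid (a,b)\in\mathcal T] \leq \Pr[f_1\in\mathcal T]$ for a weighted uniform spanning tree $\mathcal T$. This is precisely the negative correlation of two edges in a graphic matroid — known in general to be false, \emph{except} it is true whenever one of the two edges is in a parallel class, and $f_1$ here is parallel to $f_2$. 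So the clean route is: (i) pass to the $G'$ multigraph via Corollary~\ref{thm:volume_resistance}; (ii) invoke the known negative correlation of an edge $f_1$ against any other edge when $f_1$ lies in a nontrivial parallel class (equivalently, a short direct computation contracting/deleting $f_2$ and using Rayleigh), which gives $AD \geq BC$; (iii) conclude $g'(t)\geq 0$. For the remaining case $(a,b)=(x,y)$, the numerator is $\Gamma(\dots,t-k,\dots) = (t-k)C' + D'$ while the denominator is $tC'+D'$ with the \emph{same} $C',D'$ (since deleting/contracting $(x,y)$ kills the dependence on its own capacity), giving $g(t) = ((t-k)C'+D')/(tC'+D')$, whose derivative has sign $tC'+D' - (t-k)C' - \text{(lower order)} = kC' \geq 0$ after simplification — so it is monotone non-decreasing as well, completing the proof.
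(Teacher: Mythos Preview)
Your approach is essentially the paper's: both reduce the question to the Rayleigh inequality for the spanning-tree generating polynomial. The paper computes the volumes before and after adding capacity $h$ on $(a,b)$ and rearranges $\mu(B)\mu(Z)\geq\mu(A)\mu(Y)$ directly into $\Gamma_{x=y,a=b}(c)\,\Gamma(c)\leq\Gamma_{a=b}(c)\,\Gamma_{x=y}(c)$; you instead differentiate the M\"obius function $g(t)=(At+B)/(Ct+D)$ and land on $AD\geq BC$. These are the same inequality: if you expand your $A,B,C,D$ via deletion--contraction in the $(x,y)$ variable (write $P=\Gamma_{/e,/f}$, $Q=\Gamma_{/e,\setminus f}$, $R=\Gamma_{\setminus e,/f}$, $S=\Gamma_{\setminus e,\setminus f}$ with $e=(a,b)$, $f=(x,y)$), a two-line computation gives $AD-BC=k(QR-PS)$, and $QR\geq PS$ is exactly $\Gamma_{/e}\Gamma_{/f}\geq\Gamma\,\Gamma_{/e,/f}$, the Rayleigh condition.

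There is, however, a factual error in your justification that you should fix. You write that negative correlation of two edges in a graphic matroid is ``known in general to be false,'' and then restrict to the parallel-class special case via the multigraph $G'$. This is backwards. Negative correlation of edges in (weighted) uniform \emph{spanning trees} is classically true for all pairs of edges---this is precisely the Rayleigh property of graphic matroids (Kirchhoff; see the paper's citations \cite{choe2006rayleigh,kirchhoff1847ueber}). What is open is negative correlation for \emph{forests} (independent sets rather than bases), and what is false is negative correlation for arbitrary matroids. Your inequality $AD\geq BC$ therefore follows immediately from the Rayleigh property applied to the two edges $(a,b)$ and $(x,y)$ in $G$ itself; the detour through Corollary~\ref{thm:volume_resistance}, the multigraph $G'$, and the parallel pair $f_1,f_2$ is unnecessary. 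Your separate handling of the case $(a,b)=(x,y)$ is fine (and the derivative there is $k(C')^2/(tC'+D')^2\geq 0$).
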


\begin{proof}
  If $a$ and $b$ are disconnected, connecting $a$ to $b$ with an edge of capacity $0$ does not change the liquidity of any transaction, so without loss of generality assume $a$ and $b$ are connected by an edge of possibly $0$ capacity.

  Let $Z$ be the image of the entire space of configurations before adding capacity on edge $(a,b)$ , and let $A\subset Z$ be the image of the space where $\tau$ succeeds.  Let $Y$ be the image of the entire space of configurations after capacity is increased along $(a,b)$, and let $B\subset Y$ be the image of the space where $\tau$ succeeds after capacity is increased along $(a,b)$.
 
  For convenience, write $c=(c(e_1),...,c(e_m))$ and $c^\prime=(c(e_1),..., c(x,y)-k,...,c(e_m))$.   Let $\mu(\cdot)$ denote the volume of a set in $\mathbb{R}^n$. Finally, let $h$ be the increase in capacity along $(a,b)$.

  It suffices to show that $\mu(B)/\mu(Y) \geq \mu(A)/\mu(Z)$.

  Note that $Z$ is the Minkowski sum of the vectors $\lbrace c(e)D(e)\rbrace_{e\in E}$ and that $A$ is the Minkowski sum of the vectors $\lbrace c^\prime(e)D(e)\rbrace_{e\in E}$.

  Then $\mu(Z)=\Gamma(c)$ and $\mu(A)=\Gamma(c^\prime)=\Gamma(c)-k\Gamma_{x=y}(c)$.  

  Moreover, adding $w$ to $A$ and $Z$ to get $B$ and $Y$ means that $B$ and $Y$, respectively, are the Minkowski sums of $A$ and $Z$ with the the vector $hw$.  Hence, $\mu(Y) = \mu(Z) + h\Gamma_{a=b}(c)$ and $\mu(B) = \mu(A)+ h\Gamma_{a=b}(c^\prime)=\Gamma(c)-k\Gamma_{x=y}(c)+h\Gamma_{a=b}(c)-kh\Gamma_{x=y,a=b}(c)$.  

  Rearranging terms shows that inequality holds if and only if $\Gamma_{x=y, a=b}(c)\Gamma(c)\leq \Gamma_{a=b}(c)\Gamma_{x=y}(c)$, which holds because the tree matroid is Rayleigh \cite{choe2006rayleigh,kirchhoff1847ueber}. 

\end{proof}

More generally, we can prove the monotonicity conjecture if we bound transaction size.

\begin{theorem}
For every credit network $G$, there exists a constant $\hat{k}$ such that $G$ is monotone with regard to transactions of size less than $\hat{k}$.
  
\end{theorem}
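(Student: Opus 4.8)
The plan is to run the volume--ratio dictionary of Section \ref{sec:monotonicity} in reverse. Write the liquidity of $\tau$ as $\mu(A_k)/\mu(Z)$, where $Z$ is the zonotope of all cycle-equivalence classes (so $\mu(Z)=\Gamma(c)$ by Theorem \ref{thm:zonovolume}) and $A_k=Z\cap(Z+v)$ is the success region, $v=kD(p)$ for a path $p$ from $x$ to $y$. Note $A_k$ is convex, being the intersection of $Z$ with a translate of $Z$. Increasing $c(a,b)$ by $h$ replaces $Z$ by $Y=Z\oplus[0,hD(a,b)]$ and $A_k$ by $B=Y\cap(Y+v)$, and $B\supseteq A_k\oplus[0,hD(a,b)]$ (if $z,z-v\in Z$ then $z+t\,hD(a,b),\,z-v+t\,hD(a,b)\in Y$ for $t\in[0,1]$). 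Applying the Cavalieri identity $\mathrm{vol}_n(K\oplus[0,hD(a,b)])=\mathrm{vol}_n(K)+h\,\gamma(K)$, valid for convex $K$, where $\gamma(K):=|D(a,b)|\cdot\mathrm{vol}_{n-1}$ of the orthogonal projection of $K$ onto $D(a,b)^{\perp}$, gives $\mu(Y)=\mu(Z)+h\,\gamma(Z)$ and $\mu(B)\ge\mu(A_k)+h\,\gamma(A_k)$; moreover $\gamma(Z)=\Gamma_{a=b}(c)$ by comparison with the expansion $\mu(Y)=\Gamma(c)+h\Gamma_{a=b}(c)$ already used in the preceding theorem. Since $\tfrac{\mu(A_k)+ht}{\mu(Z)+hs}\ge\tfrac{\mu(A_k)}{\mu(Z)}$ for every $h\ge 0$ exactly when $t\,\mu(Z)\ge s\,\mu(A_k)$, it suffices to produce $\hat k>0$ such that, for all $k<\hat k$ and all pairs $(a,b)$, $\gamma(A_k)\,\mu(Z)\ge\gamma(Z)\,\mu(A_k)$; equivalently $\mu(A_k)/\mu(Z)\le\gamma(A_k)/\gamma(Z)$. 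This inequality is independent of $h$, so proving it for small $k$ settles all capacity increases at once, and since $V$ is finite we may take $\hat k$ to be the minimum over pairs $(a,b)$ of the thresholds produced below.

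Next I would recast both deficits in terms of the fibres of $Z$ along $D(a,b)$. Let $\ell_Z(p')$ be the length of the segment $Z\cap(p'+\mathbb{R}D(a,b))$. Then $\mu(Z)-\mu(A_k)=F_v(k)$ is the failure volume, with $F_v'(0)=\Gamma_{x=y}(c)$ by Lemma \ref{thm:smalltx}; and $\gamma(Z)-\gamma(A_k)=|D(a,b)|\cdot\mathrm{vol}_{n-1}(\Pi)$, where $\Pi$ is the set of $p'$ with $\ell_Z(p')>0$ whose entire $D(a,b)$-fibre of $Z$ lies outside $Z+v$, i.e. the set of $p'$ such that $\tau$ fails from \emph{every} configuration projecting to $p'$. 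Since $D(a,b)$ is orthogonal to $D(a,b)^{\perp}$, the fibres indexed by $\Pi$ are disjoint subsets of the failure region, so $F_v(k)\ge\int_{\Pi}\ell_Z(p')\,dp'$. The target $\mu(A_k)/\mu(Z)\le\gamma(A_k)/\gamma(Z)$ asserts that $A_k$ keeps a larger fraction of $Z$'s $D(a,b)$-shadow than of its volume, which is geometrically natural: $Z\setminus A_k$ is a shell of width $\Theta(k)$ transverse to $v$, and trimming a thin shell can delete shadow only through the comparatively few \emph{short} fibres of $Z$.

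The case split, and the main obstacle, is whether $(a,b)$ already carries positive capacity. If $c(a,b)>0$, write $Z=Z'\oplus[0,c(a,b)D(a,b)]$, so every $D(a,b)$-fibre of $Z$ has length at least $c(a,b)|D(a,b)|$. A fibre can lie entirely in the failure region $Z\setminus(Z+v)$ only if it is short: two fibres of $Z$ and of the translate $Z+v$ over the same $p'$, both in direction $D(a,b)$, fail to meet only when one has length $O(k)$, with a constant controlled by $|D(p)|$ and the Lipschitz constants of the support function of $Z$. Hence for $k$ below $c(a,b)|D(a,b)|$ divided by that constant, $\Pi=\emptyset$, so $\gamma(A_k)=\gamma(Z)$ and the target inequality is immediate from $\mu(A_k)\le\mu(Z)$. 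The hard case is $c(a,b)=0$, i.e. adding a genuinely new edge: there $\Pi$ need not be empty. Every fibre indexed by $\Pi$ still sits in the width-$\Theta(k)$ failure shell and is therefore short, so $\mathrm{vol}_{n-1}(\Pi)\le\mathrm{vol}_{n-1}\{p':\ell_Z(p')\le Ck\}=O(k)$ (piecewise-linearity of the width function of the zonotope $Z$ gives the linear rate); but $\mu(Z)-\mu(A_k)=F_v(k)$ is also only $\Theta(k)$, so the proof must show that the $O(k)$ constant for $\mathrm{vol}_{n-1}(\Pi)$ does not exceed $\Gamma_{a=b}(c)/\Gamma(c)$ times the $\Theta(k)$ constant of $F_v(k)$.

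I expect this constant-matching to be the crux. It should follow from one of two refinements. Either one sharpens the lower bound on $\mu(B)$ by crediting it with the extra successful configurations that route $\tau$ through the new edge (these were discarded when we used only $B\supseteq A_k\oplus[0,hD(a,b)]$), which geometrically are exactly the configurations whose $D(a,b)$-fibres touch $Z+v$ after an $O(h)$ shift and which therefore compensate for the short fibres counted in $\Pi$. Or one expands $F_v(k;G_h)/\Gamma_{G_h}(c)$ as a power series in $k$; its first-order coefficient reduces, just as in the preceding theorem, to the Rayleigh inequality $\Gamma_{x=y,a=b}(c)\,\Gamma(c)\le\Gamma_{a=b}(c)\,\Gamma_{x=y}(c)$ for the graphic matroid, with the higher-order remainders absorbed by taking $\hat k$ small. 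The degenerate sub-cases ($a=b$, or $\{a,b\}=\{x,y\}$, where identifying endpoints creates loops and the relevant $\Gamma$-terms degenerate) are easy and fall under the positive-capacity case.
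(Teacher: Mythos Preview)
Your reduction is clean and correct: from $B\supseteq A_k\oplus[0,hD(a,b)]$ you extract the $h$-free sufficient condition $\gamma(A_k)\,\mu(Z)\ge\gamma(Z)\,\mu(A_k)$, and your treatment of the case $c(a,b)>0$ (all $w$-fibres have length bounded below, so $\Pi=\emptyset$ for small $k$) is sound. The gap is exactly where you flag it: when $c(a,b)=0$ you never actually prove the inequality. Approach (a), comparing the $O(k)$ constants for $\mathrm{vol}_{n-1}(\Pi)$ and $F_v(k)$, gives no handle on those constants beyond piecewise-linearity of the width function, and there is no reason the first should be dominated by $\Gamma_{a=b}/\Gamma$ times the second from geometry alone. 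Approach (b) is the right idea but you stop short of executing it.

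The paper's proof is precisely your approach (b), carried through, and it does not split on whether $(a,b)$ is already an edge. The key move you are missing is to apply Lemma~\ref{thm:smalltx} not to $G$ but to the augmented network $G_h$ that already has the extra capacity $h$ on $(a,b)$. That lemma, applied to $G_h$, gives $\partial_k f_v(k,h)\big|_{k=0}=\bigl(\Gamma_{x=y}(c)+h\,\Gamma_{x=y,a=b}(c)\bigr)\big/\bigl(\Gamma(c)+h\,\Gamma_{a=b}(c)\bigr)$ in closed form for \emph{every} $h\ge0$; the Rayleigh inequality for the graphic matroid then shows this derivative is non-increasing in $h$, with strict decrease unless $(x,y)$ and $(a,b)$ lie in no common cycle (in which case monotonicity is trivial). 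Since $f_v(0,h)=0$ for all $h$, this already gives $f_v(k,h)\le f_v(k,0)$ on an interval $(0,\bar k(h))$ for each $h>0$, with no fibre analysis and no case split. For uniformity in $h$ the paper observes that, writing $f_v(k,h)=1-(A+C_h)/(B+D_h)$ with $D_h=h\,\Gamma_{a=b}$, the ratio $C_h/D_h$ is independent of $h$; hence the inequality $f_v(k,h)\le f_v(k,0)$, which is equivalent to $C_h/D_h\ge A/B$, holds for one $h>0$ iff it holds for all, and one may take $\hat k=\bar k(h_0)$ for any fixed $h_0$. This is exactly the linearity of $\mu(B_{k,h})$ in $h$ that your inclusion $B\supseteq A_k\oplus[0,hD(a,b)]$ only lower-bounds; the paper asserts equality of the volumes (``the volume of any convex subset of the Minkowski sum is increased simply by $h$ times its projection''), which is what turns your sufficient condition into an if-and-only-if and removes the need to estimate $\gamma(A_k)$ directly.
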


\begin{proof}

  Let $v$ be the direction of a path from $x$ to $y$, and let $w$ be the direction of a path from $a$ to $b$.  Let $c=\lbrace c(e_i)\rbrace_{e_i\in E}$.  Finally, let $f_v(k, h)$ denote the probability that a transaction sending flow of size $k$ in direction $v$ fails in the graph where capacity $h$ has been added to an edge in direction $w$.

  First, note that by Lemma \ref{thm:smalltx}, at $k=0$, $\frac{\partial}{\partial k}f_v(k, h)=(\Gamma_{x=y}(c)+h\Gamma_{x=y, a=b}(c))/(\Gamma(c)+h\Gamma_{a=b}(c)$.

  Because the tree matroid is Rayleigh \cite{choe2006rayleigh}, some computation shows that $\frac{\partial}{\partial k}f_v(k, 0) \leq \frac{\partial}{\partial k}f_v(k, h)$ at $k=0$ for every $h>0$.  These two quantities are equal if and only if there are no cycles containing the edges $(a,b)$ and $(x,y)$, in which case monotonicity holds trivially.

  Then for any fixed $h$, there exists some $\overline{k}>0$ such that for all $k<\overline{k}$, $f_v(k, 0)\leq f_v(k, h)$.

  Now, let $A$ denote the volume where the transaction succeeds before the capacity addition, and let $B$ denote the volume of the entire space before the capacity addition.  Let $C_h$ denote the marginal volume where the transaction succeeds after the capacity addition, and let $D_h$ denote the marginal volume of the entire space after the capacity addition.

  Importantly, because the space of configurations is a Minkowski sum, adding capacity simply adds an element to the Minkowski sum.  The volume of any convex subset of the Minkowski sum, therefore, is increased simply by $h$ times its projection in the direction $w$.  Thus the ratio $C_h/D_h$ is constant with respect to $h$.

  Note that $f_v(k,h)=1-(A+C_h)/(B+D_h)$.  Then for any $k$, if $f_v(k,h)\leq f_v(k,0)$ for any $h>0$, then $f_v(k, h)\leq f_v(k, 0)$ for all $h>0$.  

  Putting this all together, there exists some $\hat{k}>0$ such that the credit network is monotone with respect to all transactions along paths with transaction size less than $\hat{k}$.

\end{proof}

\section{Efficient Sampling}

In Theorem \ref{thm:mkdiffgood}, we showed that for a narrow class of parameters, agents can compute the probability of transaction success.  But this class of parameters was quite narrow, and likely not directly relevant to real-world networks.  An agent in a payment network might like to be able to estimate this probability efficiently, so that they could measure the effects of behavior change under a wide variety of distributional assumptions.

In prior work on credit networks that assume that transactions have discrete sizes, estimating this probability is roughly equivalent to sampling a random forest of the underlying graph.  A polynomial-time algorithm for this was recently discovered in \cite{anari2018log}.  This algorithm is quite interesting from a theoretical standpoint, but likely is still too slow to run on real-world graphs.

By contrast, we observe here that removing the discrete size requirement makes sampling a random network state quite efficient for a wide variety of parameter regimes.  Where earlier work on credit networks invoked long lines of detailed mathematical machinery, sampling a random network state here simply requires sampling a uniformly random point from a convex body.  This can be done with the well-studied hit-and-run sampling scheme \cite{smith1984efficient}.
\begin{theorem}

If $\pi$ is a log-concave distribution on the set of cycle-equivalence classes of a credit network, then the hit-and-run sampling algorithm samples a random cycle-equivalence class from $\pi$.

The algorithm requires a preprocessing step of time complexity $\widetilde{O}(n^5)$ preprocessing and (amortized) $\widetilde{O}(n^3\beta(\vert E\vert))$ per sample, where $\beta(x)$ denotes the complexity of solving a linear program with $x$ constraints (hiding log factors and dependencies on $1/\varepsilon$).

\end{theorem}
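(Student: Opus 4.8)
The plan is to reduce the sampling problem to drawing a point from a log-concave density on an explicitly presented convex body, and then to run hit-and-run with its standard mixing analysis, with the only real effort going into the cost of the per-step oracle for our particular body. Fix a spanning representation $D$ in which some tree maps to the standard basis. By Theorem~\ref{thm:cyclemksum} (together with the bijectivity established in Section~\ref{sec:configurations}), $D$ identifies the set of cycle-equivalence classes with the zonotope $Z=\{\sum_{e\in E}\lambda_e\,c(e)D(e):\lambda_e\in[0,1]\}\subset\mathbb{R}^n$, that is, the image of the cube $[0,1]^m$, $m=|E|$, under the linear map $V$ whose columns are the generators $c(e)D(e)$. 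Since the generators indexed by any spanning tree of positive-capacity edges span $\mathbb{R}^n$, we may assume $Z$ is full-dimensional (otherwise we restrict to its affine hull). Pushing $\pi$ forward along $S_D$ and using that an affine image or restriction of a log-concave function is log-concave, $\pi$ becomes a log-concave density on the convex body $Z$; by Theorem~\ref{thm:reliability} the task is independent of the choice of $D$. Thus it suffices to draw $x\in Z$ with density proportional to $\pi$ and return the unique class $S_D^{-1}(x)$.

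For hit-and-run we need, per step, the two endpoints of the chord $\{x+td\}\cap Z$ through the current point $x$ in the sampled direction $d$, plus a function oracle for $\pi$ to resample along the chord. The chord endpoints are given by the pair of linear programs $\max/\min\{t:x+td=V\lambda,\ \lambda\in[0,1]^m\}$, each on $O(m)$ variables and $O(m)$ constraints, hence solvable in time $\beta(m)$; membership in $Z$ is the same feasibility program. Composing in the rounding transformation used in preprocessing replaces $V$ by $\Lambda V$ and changes neither the size nor the structure of these programs, so every step still costs $O(1)$ linear programs on $O(m)$ constraints.

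For preprocessing, run the Lov\'asz--Vempala rounding procedure for log-concave functions to put $(Z,\pi)$ into near-isotropic position; this makes $\widetilde O(n^5)$ oracle calls and also supplies a warm start. After rounding, hit-and-run mixes in $\widetilde O(n^3)$ steps per sample from a warm start (and $\widetilde O(n^4)$ from the single initial central start), with consecutive output samples serving as warm starts for one another, which yields the amortized $\widetilde O(n^3\beta(m))$ cost per sample; correctness of the returned distribution is precisely the mixing guarantee for hit-and-run on a log-concave target over a convex body \cite{smith1984efficient}.

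The main obstacle is not the conceptual reduction but making the two ingredients fit together cleanly: one must verify that the chord and membership oracle for a zonotope presented by its generator matrix is genuinely a single LP on $O(m)$ constraints and that isotropic rounding preserves this, and one must check that the $O^*$ guarantees for hit-and-run and for the rounding procedure actually apply to $(Z,\pi)$ --- this requires bounding the inradius, outradius, and the range of $\pi$ over $Z$ in terms of the capacities $c(e)$ and $n$, since the $\widetilde O$ notation suppresses polynomial dependence on these geometric parameters. Everything else is bookkeeping.
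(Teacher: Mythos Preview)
Your proposal is correct and follows essentially the same route as the paper: reduce to sampling from a log-concave density on the zonotope $Z$ via the spanning-representation bijection, invoke the Lov\'asz--Vempala hit-and-run guarantees for convex bodies, and observe that the per-step oracle is a linear program on $O(|E|)$ variables/constraints. The paper's own proof is in fact terser than yours---it simply cites \cite{lovasz2003hit,lovasz2007geometry} and notes that zonotope membership is an LP---whereas you spell out the chord-endpoint LPs, the rounding preprocessing, and the warm-start amortization, and you correctly flag the geometric parameters hidden in the $\widetilde O$.
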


By the results of \cite{cohen2019solving}, $\beta(x)\leq x^\omega$, for $\omega$ the current matrix multiplication constant.

\begin{proof}
The zonotope is a convex body, and thus this statement is simply a special case of the results of \cite{lovasz2003hit} and \cite{lovasz2007geometry}.  The algorithm, however, requires $\widetilde{O}(n^3)$ zonotope membership oracle calls.  Checking membership in a zonotope can be solved with a linear program on $O(\vert E\vert)$ variables. 

\end{proof}



When the distribution $\pi$ is uniform, we can improve our sample time complexity to the minimum of $O(\vert E\vert\beta(\vert E\vert))$.  The image under a spanning representation of the space of cycle-equivalence classes is not an arbitrary zonotope, but rather, has additional structure that we can exploit.

\begin{theorem}
\label{thm:uniform_sample_algorithm}
If $\pi$ is the uniform distribution on cycle-equivalence classes of a credit network, then there exists an exactly uniform sampling algorithm with time complexity $O(\vert E\vert (\beta(\vert E\vert)+\vert E\vert^\omega))$.

\end{theorem}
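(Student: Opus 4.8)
The plan is to exploit the special structure of the zonotope $Z$ arising from a spanning representation, combined with the known bijection (Theorem~\ref{thm:zonovolume} and its underlying argument via multilinearity of the determinant) between the simplices in a fine decomposition of $Z$ and the spanning trees of $G$. Concretely, recall that the volume of $Z$ equals $\Gamma(c(e_1),\dots,c(e_m))$, and that this identity comes from the fact that $Z$ decomposes (up to measure zero) into parallelepipeds indexed by the spanning trees $T$ of $G$, the parallelepiped for $T$ having volume $\prod_{e\in T}c(e)$ (here we use that $\hat D$ sends some tree to the standard basis, so every bracket $[D(e):e\in T]$ is $\pm 1$). To sample a uniform point of $Z$, it therefore suffices to: (i) sample a spanning tree $T$ with probability proportional to $\prod_{e\in T}c(e)$; (ii) sample a uniform point inside the corresponding parallelepiped; and (iii) apply the affine map placing that parallelepiped in its correct location within $Z$. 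Step (ii) is trivial — pick $n$ independent uniforms and take the corresponding linear combination of the generators $\{c(e)D(e)\}_{e\in T}$. Step (iii) requires knowing the offset vector for $T$'s cell, which the zonotopal decomposition supplies combinatorially (a standard construction orders the generators and, for each tree, the offset is a signed sum of the ``below'' generators; I would cite the Shephard / standard zonotope tiling here, or simply note it follows from the cycle-space structure already developed in Section~\ref{sec:configurations}).

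The first key step, weighted spanning-tree sampling, is the one that determines the running time. This is a classical problem: one can sample a $w$-weighted spanning tree exactly by the Aldous--Broder or Wilson loop-erased random walk algorithms, but the cleanest route to the stated bound is via Kirchhoff's matrix-tree theorem and the standard sequential contraction/deletion scheme — process the edges in a fixed order, and for each edge $e$ compute the probability it lies in the random tree as a ratio of weighted tree-counts (a ratio of two determinants of weighted Laplacian minors), then either contract $e$ (condition on $e\in T$) or delete $e$ (condition on $e\notin T$) and recurse. Each of the $|E|$ steps costs one determinant evaluation, i.e.\ $O(|E|^\omega)$, or one linear-program / linear-system solve of size $O(|E|)$, giving $O(|E|\beta(|E|))$; maintaining the matrix incrementally under rank-one updates can also be folded in, which is why the theorem states the bound as $O(|E|(\beta(|E|)+|E|^\omega))$. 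I would present the algorithm as: preprocess the weighted Laplacian; loop over edges performing the contract-or-delete decision using effective-resistance-style ratios (exactly the $\Gamma_e/\Gamma$ quantities appearing throughout Section~\ref{sec:monotonicity}); output the resulting tree $T$; then run steps (ii)--(iii) above.

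Correctness has two halves. First, that the procedure outputs $T$ with probability exactly $\prod_{e\in T}c(e)/\Gamma(c)$: this is immediate from the matrix-tree theorem applied inductively, since contracting/deleting an edge exactly restricts $\Gamma$ to $\Gamma_e$ (edge in all trees) or $\Gamma - y_e\Gamma_e$ (edge in no tree), and the telescoping product of conditional probabilities collapses to the claimed weight. Second, that conditioning on $T$ and sampling uniformly in its parallelepiped cell yields the uniform distribution on $Z$: this is exactly the content of the tree-indexed decomposition behind Theorem~\ref{thm:zonovolume} — the cells partition $Z$ up to a measure-zero set, a uniform point of $Z$ lands in cell $T$ with probability $\mathrm{vol}(\text{cell }T)/\mathrm{vol}(Z) = \prod_{e\in T}c(e)/\Gamma(c)$, and is uniform within that cell conditioned on landing there. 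Finally, by Theorem~\ref{thm:reliability} the choice of spanning representation is irrelevant, so working with the distinguished $\hat D$ is without loss of generality, and the sampled point of $Z$ pulls back to a uniformly random cycle-equivalence class.

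The main obstacle I anticipate is step (iii): making the zonotopal tiling genuinely algorithmic, i.e.\ computing the correct translation vector for the cell of a given tree $T$ so that the union of sampled parallelepipeds really reconstitutes $Z$ rather than overlapping or leaving gaps. The volume identity only needs the tiling to exist up to measure zero, but a \emph{sampling} algorithm needs to actually produce points distributed over the whole body. The fix is to fix a generic linear order (or generic ``shift'' direction) on the generators $\{c(e)D(e)\}$, for which the zonotope tiling is classical and explicit: the cell of tree $T$ is $\bigl\{\sum_{e\in T}\lambda_e c(e)D(e) + \sum_{e\notin T} \mu_e(T)\, c(e)D(e) : \lambda_e\in[0,1]\bigr\}$ where each $\mu_e(T)\in\{0,1\}$ is determined by whether $e$ is ``internally'' or ``externally'' active with respect to $T$ in the chosen order — so computing the offset is an $O(|E|)$ combinatorial pass per tree, dominated by the tree-sampling cost. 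I would state this tiling as a lemma (citing the standard zonotope/matroid literature) and verify that the activities are computable within the stated budget, which completes the running-time accounting.
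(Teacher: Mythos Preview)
Your high-level strategy matches the paper's: both arguments rest on the tree-indexed decomposition of the zonotope $Z$ (the fact behind Theorem~\ref{thm:zonovolume}) and both sample a weighted spanning tree by a contract/delete walk driven by effective-resistance ratios $\Gamma_e/\Gamma$. Where you diverge is in the placement step, your step~(iii). The paper does \emph{not} compute an explicit offset for the cell of a sampled tree. Instead it interleaves sampling and placement: processing edges $e_m,e_{m-1},\dots$ in order, at each step it decomposes $Z_k$ into $Z_{k-1}$ and the slab $S_{k-1}+ (0,c(e_k)]D(e_k)$, chooses between them by the resistance ratio, and when it takes the slab it recurses on the \emph{projection} orthogonal to $D(e_k)$ and then lifts the recursively sampled point back to the visible surface $S_{k-1}$ by solving the LP $\max\{\alpha:p+\alpha D(e_k)\in Z_{k-1}\}$. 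That LP is precisely where the $\beta(|E|)$ term in the theorem comes from---it is not, as you write, an alternative cost for the tree-sampling step (effective resistance is a linear-system/determinant computation, contributing the $|E|^\omega$ term).

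The gap in your write-up is therefore exactly the one you flag as ``the main obstacle.'' Your proposed fix---that the translation vector of tree $T$'s tile is $\sum_{e\notin T}\mu_e(T)c(e)D(e)$ with $\mu_e(T)\in\{0,1\}$ determined by internal/external activity---is not correct as stated. Already on $K_3$ with the natural ordering $e_1<e_2<e_3$ and $D(e_3)=D(e_1)+D(e_2)$, the tiling that places the unit square at the origin forces $\mu_{e_2}(\{e_1,e_3\})=1$ and $\mu_{e_3}(\{e_1,e_2\})=0$, yet both $e_2$ and $e_3$ are externally passive for their respective trees; no assignment ``active $\mapsto 0/1$'' is consistent. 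A correct explicit tiling needs the \emph{oriented} structure (signs in the fundamental circuit relative to a fixed generic covector), not just the unordered matroid activities, and you would have to spell that rule out and verify it tiles $Z$ without overlap. This can be done, and if you carried it through your algorithm would arguably be cleaner (and would drop the $\beta(|E|)$ term entirely, since no LP is needed). But as written you have not supplied that argument, and the paper's LP-based lifting is exactly the device that fills this hole without any combinatorial tiling lemma.
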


This shaves a factor of $\vert V\vert^3/\vert E\vert$ off of the asymptotic sampling complexity.  Note that real-world payment network graphs, like those deployed in Lightning, are typically quite sparse.  

Our algorithm runs in $\vert E\vert$ steps, where each step involves solving a linear program on $\vert E\vert$ variables (taking time $\beta(x)$) and computing the electrical resistance between two points in a graph (taking time $\vert E\vert^\omega$).  We remark that using the linear program solver of \cite{cohen2019solving} gives total complexity of $O(\vert E\vert^{\omega+1})$.  However, one could substitute a faster approximate resistance algorithm and get an approximately uniform sample with a faster complexity.  Similarly, one could use any linear program solver in our algorithm, and the time complexity would change accordingly.

\begin{proof}

Let $G=(V,E)$ be a credit network, and fix some ordering on the edges $e_i$ and some spanning representation $D$.  Furthermore, say that the first $n$ elements of the ordering form a spanning tree.  Let the edge capacity of edge $e_i$ be $c(e_i)$, and let the Minkowski sum of the first $k$ edges be $Z_k$.  Denote $m=\vert E\vert$.  It suffices to sample a uniformly random point within $Z_m$.

Let $e_i$ be any edge.  Let $S_i$ denote the portion of the surface of $Z_i$ that is ``visible'' from the direction of $e_i$.  (The surface of a zonotope is a collection of faces.  A face forms part of $S_i$ if its normal vector $n$ has $n\cdot D(e_i)\geq 0$).  Note that $Z_i$ can be decomposed into the almost disjoint union of two pieces.  The first is $Z_{i-1}$, and the second is the remainder, which is the Minkowski sum of the set $S_{i-1}$ and the vector $D(e_i)c(e_i)$.  Their intersection is a translation of $S_{i}$.

As noted in Corollary \ref{thm:volume_resistance}, the ratio of the volume of the second part to the total volume of $Z_i$ is computable via an effective resistance calculation.

Let $P_i$ be the projection of $S_i$ onto the hyperplane through the origin with normal $D(e_i)$.  A random point in this second part can be sampled by taking a random point in $P_{i-1}$, mapping back to the corresponding point in $S_{i-1}$, and adding $\lambda_ic(e_i)D(e_i)$ for $\lambda_i$ drawn from $(0, 1]$ uniformly at random.  

The mapping from points in $P_i$ back to points in $S_i$ can be computed via a linear program. Specifically, given a point $p$, maximize $\alpha$ such that $p+\alpha D(e_i)$ is a point within $Z_i$.  This linear program has $O(n+m)$ constraints and $m$ decision variables, so solving this $LP$ takes time $O(\beta(m))$.

Conveniently, both $Z_{i-1}$ and $P_{i-1}$ are zonotopes, so our sampling process naturally sets itself up for a recursive sampling algorithm.

An algorithm $\mathcal{A}$ for sampling a random point from the Minkowski sum, therefore, could look as follows.  Given as input vectors $v_1,...,v_k\in\mathbb{R}^n$ and capacities $c(e_1),...,c(e_k)$, $\mathcal{A}$ will return a uniformly random point in the Minkowski sum of the input vectors.  For notation, let this Minkowski sum again be $Z_k$ and its surface in the direction $D(e_k)$ be $S_k$.   
First, check whether the input set of vectors is linearly independent.  If they are, return a random list of $k$ values drawn uniformly at random from $[0,1]$.

Next, compute the effective resistance $r$ along edge $k$.  Use this to choose between the spaces $Z_k\setminus Z_{k-1}$ and $Z_{k-1}$ randomly proportionally to their volumes.

If the algorithm chooses $Z_k\setminus Z_{k-1}$, the algorithm can invoke $\mathcal{A}$ on the list of vectors $(w_1,...,w_{k-1})$, where $w_i$ is the component of $v_i$ perpendicular to $v_k$.  It can then find the corresponding point $s$ in $S_k$ via the above linear program.  

Finally, the algorithm can sample $\lambda_k$ uniformly at random from $(0, 1]$ and return $s+\lambda_kD(e_k)$.

If we do not accept $e_k$, simply return $\mathcal{A}(w_1,...,w_{k-1}, c(e_1),...,c(e_{k-1}))$.

Computing a spanning representation takes time $O(nm)$.  Each step of this algorithm takes time $O(m^\omega+\beta(m) + m)$ for solving the linear program, computing effective resistance, and processing a list of vectors.  Since the algorithm recurses $m$ times, the total time complexity is $O(m*(m^\omega+\beta(m))$. 
\end{proof}

\section{Future Work, and A Note on Symmetricity}

Our arguments for the monotonicity conjecture require that the Markov-chain-induced probability measure on the set of configurations be uniform over cycle-equivalent classes of configurations.  We showed that this occurs when the transaction distribution is symmetric.  However, real-world transaction distributions may not be symmetric.

In the real world, transactions on a payment network do not necessarily happen in sequence.  We chose to analyze a discrete-time transaction model for conceptual simplicity and to follow the pattern of prior work on credit networks; however, we could have analyzed a continuous-time transaction model where in some window of time, pairs of agents can transact many times.  Within a fixed time window, the net transaction balance between every pair of agents $u$ and $v$ might be randomly distributed.  We could then sum these biases, weighted by the direction of a transaction from $u$ to $v$, to get an overall direction of bias $\nu\in\mathbb{R}^n$.  

With this motivation, we could also model transactions via a ``Reflected Brownian Motion'' with drift vector $\nu$ (and normal reflection).  It follows directly from \cite{harrison1987multidimensional} that the invariant distribution $\mu(x)$ of an RBM with bias $\nu$ is proportional to $e^{x\cdot\nu}$.  

Of course, in the symmetric distribution case, $\nu$ is $0$ and the distribution is again uniform.  If $\nu$ is nonzero, then the invariant distribution is efficiently sampleable using an algorithm like Hit and Run.  Some asymmetric transaction distributions still have a net bias of $0$.  Finding realistic real-world assumptions that would be sufficient to imply uniformity of the induced stationary measure, or analyzing monotonicity when the induced stationary measure is nonuniform, are interesting open research problems.  

General models of Brownian motion in a bounded space have been studied in other contexts, such as network queuing theory.  A credit network is not exactly a traditional queuing network, but nevertheless, understanding connections with results in queuing theory or the theory of diffusion processes might prove fruitful.  A potential line of future work would be to study cases where agent behavior changes based on the current configuration of the network.  Under mild technical conditions, the notion of an induced invariant measure when agent behavior varies with state is still well-defined, and analogous concepts have been studied in other literature (such as \cite{kang2014characterization}).

\section{Conclusion}
  The Credit Network model forms an accurate representation of the combinatorics underlying modern ``Layer-2'' transaction protocols.  The performance of these networks is an important piece of the process of scaling up cryptocurrency deployments.

  Prior work was only able to analyze the scenario where transactions have discrete sizes, and could only give meaningful liquidity bounds when transactions had a single fixed size that was not too small relative to edge capacities.  Not only is this not a natural assumption, but the resulting analysis left open the important question of monotonicity and the problem of sampling efficiently random network configurations.  In this work, we show that not only is the analogous continuous model analytically tractable, but that removing the assumption of discrete transaction sizes actually enables efficient sampling and progress on the monotonicity conjecture.  

  The results are obtained by exploiting a relationship between the graphs underlying payment networks and particular convex sets in $\mathbb{R}^n$.  The problems of monotonicity and sampling thereby transform into problems of analyzing the volume of convex sets.  This transformation relies crucially on the combinatorial properties of the underlying graphic matroid.

  The authors hope that connections to other branches of mathematics, such as diffusion processes, that the continuous relaxation facilitates could prove fruitful areas of research, such as when studying state-dependent agent behavior or transaction pricing.

\begin{acks}
This research was supported by the Stanford Center for Blockchain Research and by the Office of Naval Research, award number N000141912268.
\end{acks}

\bibliographystyle{splncs04}
\bibliography{unified-bib}

\appendix
\section{Section \ref{sec:configurations} Omitted Proofs}

\subsection{Proof of Theorem \ref{thm:goodreps}}

\begin{proof}
There are, in fact, many spanning representations.  This is unsurprising, given that graphical matroids are representable over $\mathbb{R}$, and the dimension of any basis of the graphical matroid is $n$.  It suffices to show that there exists a matroid representation such that summing along a cycle results in $0$, not just that the set of vectors corresponding to a cycle forms a linearly dependent set.

As a concrete example, let $T$ be a spanning tree, and let $v_1,...,v_n$ be a basis for $\mathbb{R}^n$.  Correspond the edges in $T$ with $v_1,...,v_n$.  For any other edge $(x,y)$ with $s(x,y)=1$, let $p_0=x,...p_t=y$ be the path from $x$ to $y$ in $T$, and set $D(x,y)=\sum_{i=0}^{t-1}s(p_i, p_{i+1})D(p_i, p_{i+1})$.  

Then clearly the direction walking along the cycle formed by $T$ with $(x,y)$ is $0$.  

Note that the direction summation is clearly additive.  In other words, if a path $p$ runs from $x$ to $y$ and a path $q$ runs from $y$ to $z$, then the direction of $p$ plus the direction of $q$ is equal to the direction of the concatenation of $p$ and $q$.

Observe that walking from $x$ to $y$, and then back along the same path, gives a path whose sum is $0$.  In fact, any path from $x$ to $y$ that visits a vertex $v$ twice can be decomposed into a path directly from $x$ to $y$ plus a path from $v$ to itself.  Sufficiently repeating this decomposition shows that the sum along every path in $T$ from $x$ to $y$ has the same summation.  Note that because $v_1,...,v_n$ form a basis for $\mathbb{R}^n$, this summation is $0$ if and only if $x=y$.  

For any other path $p=(p_0,...,p_t)$, if $(p_i, p_{i+1})\notin T$, then, by the construction, the summation of the path along $p$ is equal to the summation along the path $p^\prime=(p_0,...,p_i, q_1,...q_k, p_{i+1},...,p_t)$, where $(p_i, q_1,...q_t, p_{i+1})$ is a path from $p_i$ to $p_{i+1}$ in $T$.  Repeating this process until all edges are in $T$ shows that $p$ has direction $0$ if and only if $p$ is a cycle.


\end{proof}

\subsection{Proof of Theorem \ref{thm:scorecycle}}

\begin{proof}\hspace*{\fill}

\begin{enumerate}
	\item[1] This follows from part 2.  Assuming part 2, if transactions along cycles take $w_1$ to $w_2$, then, because cycles have direction $0$, $S(w_1)=S(w_1)+0=S(w_2)$.
	\item[2] Suppose that a path $p=(p_0,...,p_t)$ is feasible in configuration $w_1$.

	By rearranging the summations, we get that

  \begin{dmath*}
   S(w_1)-S(w_2)=\Sigma_i s(p_i, p_{i+1})k*D(p_i, p_{i+1})=k*D(p)
  \end{dmath*}
	
  \item[3]   Let $p^\prime$ be the reversal of $p$.  Then $p$ concatenated with $p^\prime$ forms a cycle, and $q$ concatenated with $p^\prime$ forms a cycle, so 

  \begin{dmath*}
  D(p)=-D(p^\prime)=D(q)
 \end{dmath*}
 
\end{enumerate}
\end{proof}

\subsection{Proof of Theorem \ref{thm:cyclemksum}}
\begin{proof}

This theorem follows almost directly from the construction.  If a vector $v$ is expressable as a Minkowski summation with coefficients $\lbrace \lambda_{(x,y)}\rbrace_{(x,y)\in E, s(x,y)=1}$, then the weight map $w$, defined by $w(x,y)=\lambda_{(x,y)}c(x,y)$ if $s(x,y)=1$ and $c(x,y)-w(y,x)$ otherwise, forms a feasible network configuration.  
Conversely, if $S(w)=\sum_{(x,y)\in E:s(x,y)=1}\lambda_{(x,y)}c(x,y)D(x,y)$, then $\lambda_{(x,y)}\in [0,1]$, so $S(w)$ is representable as a Minkowski sum of the vectors of interest.
\end{proof}

\subsection{Proof of Corollary \ref{thm:txiscycle}}
\begin{proof}
  Let $Z\subset\mathbb{R}^n$ be the space of configurations under a spanning representation.  Let $x_1$ and $x_2$ be in $Z$.  Clearly, if $x_1=x_2$, that is, $x_1$ and $x_2$ are cycle-equivalent, then for any transaction $v$, $x_1+v\in Z$ if and only if $x_2+v\in Z$.

  Suppose that $x_1$ and $x_2$ are transaction-equivalent but not cycle-equivalent.  Then the generalized transaction $v=x_2-x_1$ is feasible starting from $x_1$ and takes $x_1$ to $x_2$.  Let $k=\sup\lbrace t:tv+x_2\in Z\rbrace + 1/2$.  Because $Z$ is convex, the transaction $kv\neq 0$ is feasible starting from $x_1$ but not from $x_2$, which gives a contradiction.
\end{proof}

\section{Section \ref{sec:markov1} Omitted Proofs}

\subsection{Proof of Theorem \ref{thm:mc1invariant}}
\begin{proof}

Let $\Xi=(X, \phi, \lbrace k_{a,b}\rbrace_{(a,b)\in X})$ be a transaction model on $V$, and let $D$ be some spanning representation.

Let $p$ be the transition kernel of Markov Chain 1 using transaction distribution $\Xi$, and let $g$ be any continuous, bounded function on the compact state space $Z$.  By Theorem 1.10 of \cite{eberle2009markov}, it suffices to show that if a sequence $x_n$ converges to $x$, then $\int g(y)p(x_n, dy)$ goes to $\int g(y) p(x, dy)$.  But $p$ is the weighted sum of continuous, bounded functions, so this holds.

\end{proof}

\subsection{Proof of Theorem \ref{thm:mc1uniqueinvariant}}
\begin{proof}

Let $\Xi=(X, \phi, \lbrace k_{a,b}\rbrace_{(a,b)\in X})$ be a transaction model on $V$, let $p$ be the transition kernel of Markov Chain 1 using transaction distribution $\Xi$, and let $D$ be some spanning representation.

Let $Z$ be the image under $D$ of the entire space of configurations, and let $\mu$ be some invariant distribution.

Note that it is equivalent to view the set of transacting pairs, $X$, as a set of vectors $X^\prime$.  Moreover, $(V,X)$ is connected if and only if $X^\prime$ spans $Z$.

By Theorem 1 of \cite{harris1956existence}, it suffices to show that if a set $E\subset Z$ has nonzero measure, then from any starting point $z\in Z$, the Markov Chain must visit $E$ infinitely many times in expectation.

Note that it suffices to show that for all $z\in Z$, there is some $k$ such that $p^k(z, E)\neq 0$.  If this holds for all $z$, then in expectation, the chain starting at $z$ will visit $E$ in a finite number of steps.  After this, the chain could either stay in $E$ forever (in which case, the chain trivially visits $E$ infinitely many times), or it will leave $E$ to some $z^\prime$.  

Suppose that $X^\prime$ spans $Z$.

Because each $f_v$ is supported on $[-\varepsilon, \varepsilon]$, for any $z$, the conditional distribution started at $z$ and running for $\vert X\vert$ steps has support on an $L_\infty$ ball (subset of $Z$) of radius $\varepsilon$ around $z$.  By the same argument, after $k\vert X\vert$ steps, the conditional distribution started at $z$ has support on an $L_\infty$ ball (subset of $Z$) of radius $k\varepsilon$.  Because $Z$ is bounded, for sufficiently large $k$, the support of the conditional distribution has measure equal to $\mu(Z)$. 

Suppose Markov Chain 1 only returns to some set $E$ finitely many times in expectation.  Then $E$ is not in the support of the conditional distribution, and must have measure $0$.

\end{proof}

\subsection{Proof of Theorem \ref{thm:symmetric}}

\begin{proof}

Let $\Xi=(X, \phi, \lbrace k_{a,b}\rbrace_{(a,b)\in X})$ be a transaction model on $V$, let $p$ be the transition kernel of Markov Chain 1 using transaction distribution $\Xi$, and let $D$ be some spanning representation.  

It suffices to show that for any measurable set $A$, $\mu(A)=\int_Z p(dz, A)$, where $\mu(\cdot)$ is a uniform measure on $Z$.

A transaction corresponding to a vector $v$ starting from point $z$ lands in $A$ if either $z-v\in A$ -- that is, the transaction succeeds and results in a state in $A$ -- or if $z\in A$ and $z-v\notin Z$ -- that is, the transaction fails and the initial state was already in $A$.  Note also that for any fixed $v$, these events are mutually exclusive.  Hence, the sets $A_{v}^1=\lbrace x : x-v\in A \rbrace \cap Z$ and $A_{v}^2=\lbrace x : x-v\notin Z\rbrace\cap A$ are disjoint.

Note also that $A_{-v}^1=\left\{ x: x+v\in A\right\}\cap Z$.  

Translating by $x\rightarrow x+v$ gives the set $\left\{ y: y\in A, y-v\in Z\right\}$.   Because $\mu$ is uniform on $Z$ and both of these sets are fully contained in $Z$, we get that

\begin{dmath*}
\mu (A_{-v}^1)={\mu(\left\{ y: y\in A, y-v\in Z\right\})}
\end{dmath*}

Note also that the sets $\left\{ y: y\in A, y-v\in Z\right\}$ and $A_v^2$ form a disjoint partition of $A$.  Hence, $\mu(A)=\mu(A_{-v}^1)+\mu(A_v^2)$, and thus 

\begin{dmath}
\label{eqn:mu_iden}
2\mu(A)=\mu(A_{v}^1) + \mu(A_{v}^2) + \mu(A_{-v}^1) + \mu(A_{-v}^2)
\end{dmath}

Expanding terms thus gives

\begin{dmath*}
\int_Z p(dZ, A)={\sum_{(a,b)\in X} \phi(a,b)\int_Z \left( \int_{-\infty}^\infty k_{a,b}(c)\mathds{1}\lbrace z-cD(a,b)\in A\vee (z-cD(a,b)\notin Z \wedge z\in A\rbrace dc\right) dz}
\end{dmath*}

  Changing order of integration, we get 

\begin{dmath*}
\sum_{(a,b)\in X} \phi(a,b) \int_{-\infty}^\infty k_{a,b}(c) \int_Z \mathds{1}{\left\{ (z-cD(a,b)\in A\wedge z\in Z)\vee (z-cD(a,b)\notin Z \wedge z\in A\right\}} dz dc.  
\end{dmath*}



Because $k_{a,b}$ is symmetric, using equation \ref{eqn:mu_iden}, we can rearrange the integral above into 

\begin{equation*}
 \sum_{(a,b)\in X} \phi(a,b) \int_{0}^\infty k_{a,b}(c) (2\mu(A)) dc=\mu(A)
\end{equation*}

 Hence, $\mu$ is stationary in Markov Chain 1.

\end{proof}

\subsection{Proof of Theorem \ref{thm:reliability}}

\begin{proof}

First, note that for a spanning tree $T$, under any spanning representation, the vectors corresponding to $T$ form a basis for $\mathbb{R}^n$.  Moreover, the behavior of a spanning representation on one spanning tree uniquely determines its behavior on all other edges.  Thus, any spanning representation is the result of a linear transformation of another spanning representation.

Let $D_1$ and $D_2$ be two spanning representations, and let $M$ be the (nonsingular) matrix such that for all $e\in E$, $M*D_1(e)=D_2(e)$.  Let $v_1$ be the vector under $D_1$ corresponding to some transaction, and let $v_2=Mv_1$ be vector under $D_2$ for that same transaction.  Let $Z_1$ be the space of represented cycle-equivalent configurations under $D_1$, and let $Z_2=MZ_1$ be the space of represented cycle-equivalent configurations under $D_2$.  Finally, let $\pi_1$ be a density function on $Z_1$ and let $\pi_2$ be the density function on $Z_2$ such that $\pi_2(Mx)=\pi_1(x)$ for $x\in Z_1$.

It suffices to show, therefore, that
\begin{equation*}
 \frac{\int_{Z_1} \pi_1(z)\mathds{1}\left\{ z-v_1\in Z_1\right\} dz}{\int_{Z_1}\pi_1(z)dz}
  =\frac{\int_{Z_2} \pi_2(z)\mathds{1}\left\{ z-v_2\in Z_2\right\} dz}{\int_{Z_2}\pi_2(z)dz}
\end{equation*}

The argument follows from a change of variables.  We note that stretching the representation space should necessarily stretch the density function. Note that
\begin{dmath*}
\int_{Z_2} \pi_2(z)\mathds{1}{\left\{ z-v_2\in Z_2\right\}} dz = \int_{Z_1}\pi_2(Mz)\mathds{1}{\left\{ Mz-Mv_1\in MZ_1\right\}} \vert det(M)\vert dz = \vert det(M)\vert \int_{Z_1} \pi_1(z)\mathds{1}{\left\{ z-v_1\in Z_1\right\}} dz
\end{dmath*} 
and
\begin{dmath*}
\int_{Z_2} \pi_2(z)dz=\int_{Z_1}\pi_2(Mz) \vert det(M)\vert dz=\vert det(M)\vert \int_{Z_1} \pi_1(z)dz
\end{dmath*}

Because $M$ is invertible and $\vert det(M)\vert\neq 0$, the equality holds. 

\end{proof}
\section{Section \ref{sec:monotonicity} Omitted Proofs}
\subsection{Proof of Theorem \ref{thm:zonovolume}}
\begin{proof}

  If $A$ is a set of $n$ vectors in $\mathbb{R}^n$, define $det(A)$ as the determinant of the $n\times n$ matrix where elements of $A$ are interpreted as column vectors.

  By (57) of \cite{shephard1974combinatorial}, the volume of the Minkowski sum of a set of vectors $S=\lbrace x_1,...,x_m\rbrace\subset\mathbb{R}^n$ is equal to $\Sigma_{A\subset S:\vert A\vert=n}\vert det(A)\vert$.  The rest of this proof follows from basic facts about the determinant.

  Let $S$ be the set of vectors corresponding to the directions under a spanning representation $\hat{D}$ of edges in the underlying payment network $G$.  By assumption, $\hat{D}$ maps some spanning tree to the standard basis $B=\lbrace b_1,...,b_n\rbrace$.

  For any linearly independent set $A\subset S$ of size $n$, we would like to show that $\vert det(A)\vert=\vert det(B)\vert=1$.  Note that a set of vectors in $S$ is linearly independent if and only if the set is the set of directions under $\hat{D}$ of a spanning tree in $G$.

  Suppose for induction that the set of vectors $A_i$ is the set of directions of a spanning tree $T_{A_i}$ in $G$, $A_i$ contains $\lbrace b_1,...,b_{i}\rbrace$, $A_i\setminus \lbrace b_1,...,b_{i}\rbrace\subset A$, and $\vert det(A_i)\vert =\vert det(A)\vert $.  Set $A_0=A$.

  If $b_{i+1}\in A_i$, then set $A_{i+1}$ to $A_i$ and the induction hypothesis holds.  Otherwise, viewed as edges in $G$, the set $\lbrace b_{i+1}\rbrace\cup A_i$ contains a single simple cycle.  Say that the vectors corresponding to edges in this cycle are $b_{i+1}, c_1,...,c_k$.  Because $\hat{D}$ is a spanning representation, $b_{i+1}=\Sigma_{j=1}^k\alpha_jc_j$ for $\alpha_j=\pm 1$.  

  Set $A_{i+1}=(A_i\cup\lbrace b_{i+1}\rbrace)\setminus \lbrace c_k\rbrace$.  Then by the linearity of the determinant, we can substitute the column corresponding to $c_k$ with the summation $c_k=1/\alpha_k\Sigma_{j=1}^{k-1}\alpha_jc_j-b_{i+1}$ and expand to get that $det(A_{i})=\Sigma_{j=1}^{k-1}\alpha_j*0 + 1/\alpha_k det(A_{i+1})$.  Hence $\vert det(A_{i+1})\vert=\vert det(A_i)\vert=1$, and the induction hypothesis holds.

  Since $A_n=B$, the above induction shows that $\vert det(A)\vert=\vert det(B)\vert=1$.

  As such, the nonzero elements of the summation $\Sigma_{A\subset S:\vert A\vert=n}\vert det(A)\vert$ correspond exactly to the spanning trees of the underlying graph.  This proves the theorem in the case where every edge has capacity $1$.  The general result follows from the multilinearity of the determinant.  

\end{proof}

\subsection{Proof of Theorem \ref{thm:mkdiffgood}}
\begin{proof}
  This is a general fact about Minkowski summation, but we include here a proof for completeness.

  Let $G$ be the original credit network where $c(x,y)\geq k$, and let $H$ be a credit network where the capacity along $(x,y)$ has been reduced by $k$ (i.e. the capacities of edges in $H$ are $c^\prime$).  Without loss of generality, assume $v$ corresponds to a transaction in the direction along $(x,y)$ that the spanning representation denotes as positive.

  Let $Z$ be the image of the set of configurations of $G$, and let $Z\cap \lbrace {z:z-v\in Z}$ be the image of the set of configurations of $G$ from which transaction $v$ succeeds.  

  Let $Y$ be the image of the set of configurations of $H$ (under the same spanning representation).

  Clearly $Y\subset Z\cap \lbrace {z:z-v\in Z}$ (directly as subsets of $\mathbb{R}^n$).  If $w$ is a configuration of $H$, then $w(x,y)\leq c(x,y)-k\leq c(x,y)$, so the configuration $w^\prime=w$ except $w^\prime(y,x)=w(y,x)+k$ is a configuration of $G$ that maps to the same point in $\mathbb{R}^n$.

  To show that $Z\cap \lbrace {z:z-v\in Z}\subset Y$, it suffices to show that each $z\in Z\cap \lbrace {z:z-v\in Z}$ is the image of some configuration of $G$ where $w(x,y)\leq c(x,y)-k$.  Given such a configuration, we can construct a configuration $w^\prime$ as $w^\prime=w$ except that $w^\prime(y,x)=w(y,x)-k$.  Then $w^\prime$ is a configuration of $H$.

  Let $w$ be some configuration of $G$ from which transaction $v$ is successful.  Then from configuration $w$, $x$ can send at least $k$ units of money to $y$.  If $w(x,y)>c-k$, then there must be some other paths in $w$ that enable $x$ to send $k-(c-w(x,y))$ units of money to $y$.  But then $w$ is cycle-equivalent to a configuration $w^\prime$ constructed from $w$ by sending $k-(c-w(x,y))$ units of money from $x$ to $y$ through paths not using edge $(x,y)$ and then back to $x$ along edge $(y,x)$.

Hence, $Y=Z\cap \lbrace {z:z-v\in Z}$.
\end{proof}

\end{document}